\documentclass[conference]{IEEEtran}
\usepackage{amsmath,amsfonts,amssymb,amsthm}
\usepackage{multicol,array}
\usepackage{graphicx}
\usepackage{caption}
\usepackage{subcaption}
\usepackage{float}
\usepackage{etoolbox}
\newcommand*{\QEDA}{\hfill\ensuremath{\blacksquare}}

\theoremstyle{definition}
\newtheorem{defn}{\textit{Definition}}
\newtheorem{cons}{\textit{Construction}}
\newtheorem{claim}{\textit{Claim}}
\newtheorem{algo}{\textit{Algorithm}}
\newtheorem{thm}{\textit{Theorem}}

\newtheorem{ex}{\textit{Example}}[section]
\AtEndEnvironment{ex}{\null\hfill\QEDA}
\newtheorem{exmp}{\textit{Example}}[section]
\AtEndEnvironment{exmp}{\null\hfill\QEDA}
\newtheorem*{example}{\textit{Example}}
\newtheorem{lm}{\textit{Lemma}}

\theoremstyle{remark}
\newtheorem{remark}{Remark}
\newtheorem{cond}{\textit{Condition}}

\title{Index Codes for Interlinked Cycle Structures with Outer Cycles}
\author{K. Vikas Bharadwaj and B. Sundar Rajan, {\it Fellow, IEEE}}
\begin{document}	\maketitle
\begin{abstract}Index code construction for a class of side-information graphs called interlinked cycle (IC) structures without outer cycles is given by Thapa, Ong and Johnson in \cite{TOJ,TOJ2} (C. Thapa, L. Ong, and S. Johnson, ``Interlinked Cycles for Index Coding: Generalizing Cycles and Cliques", in \textit{IEEE Trans. Inf. Theory, vol. 63, no. 6, Jun. 2017}, ``Interlinked Cycles for Index Coding: Generalizing Cycles and Cliques", in arxiv (arxiv:1603.00092v2 [cs.IT] 25 Feb 2018)). In this paper, construction of index codes for interlinked cycle (IC) structures with outer cycles is given along with a decoding algorithm.
		\footnote{The authors are with the Department of Electrical Communication Engineering, Indian Institute of Science, Bangalore-560012, India. Email:bsrajan@iisc.ac.in}
	\end{abstract}
	\section{Introduction}
	\label{sec_int}
	The problem of index coding was introduced by Birk and Kol in \cite{bk}. The index coding problem consists of a single sender with a set of $ N $ independent messages $\mathcal{X}=\lbrace x_1,x_2,\dots,x_N\rbrace$, and a set of $ M $ users $ \mathcal{D}=\lbrace D_1,D_2,\dots,D_M \rbrace $, connected to the sender by a single shared error-free link, with the $ k^{th} $ user identified as $	D_k=(\mathcal{X}_k,\mathcal{A}_k) $, where $ \mathcal{X}_k \subseteq \mathcal{X}$ is the set of messages desired by $ D_k $, the set $ \mathcal{A}_k \subset \mathcal{X}$ is comprised of the messages available to user $ D_k $ as side information. The set $ \mathcal{A}_k $ satisfies $ \mathcal{X}_k \cap \mathcal{A}_k=\phi$. In a scalar linear index coding scheme, each message $ x_i \in \mathcal{F}_q $, $ i\in [M] $ and $ \mathcal{F}_q $ is a finite field. Sender encodes $ M $ messages to $ n $ symbols in $ \mathcal{F}_q $ using a linear mapping and $ n $ is called the length of the index code. Each user will decode their desired messages using linear combinations of transmitted $ n $ symbols and side information available to them. An index coding problem is said to be unicast \cite{Ong} if $ \mathcal{X}_k \cap \mathcal{X}_j = \phi $ for $ k\not= j $ and $ k,j\in \lbrace1,2,\dots,M\rbrace $, i.e., no message is desired by more than one user. The problem is said to be single unicast if the problem is unicast and $ |\mathcal{X}_k|=1$ for all $ k\in \lbrace1,2,\dots,M\rbrace $. A unicast index coding problem can be reduced into single unicast index coding problem, by splitting the user demanding more than one message into several users, each demanding one message and with the same side information as the original user. Single unicast index coding problems can be described by a directed graph called as  side information graph \cite{BK2}, in which the vertices in the graph represent the indices of messages $ \lbrace x_1,x_2,\dots,x_N\rbrace $ and there is a directed edge from vertex $ i $ to vertex $ j $ if and only if the user requesting $ x_i $ has $ x_j $ as side information.
The set of vertices in a directed graph $ \mathcal{G} $ is denoted by $ V(\mathcal{G}) $ and the set of vertices in the out-neighborhood of a vertex $q$ in $\mathcal{G}$ is denoted by $N^{+}_{\mathcal{G}}(q)$.

	Interlinked Cycle Cover (ICC) scheme is proposed as a scalar linear index coding scheme to solve single unicast index coding problems by Thapa et al. \cite{TOJ}, by defining a graph structure called an Interlinked Cycle (IC) structure. 
After a correction to the definition of IC structure by Thapa et al. in \cite{TOJ2} the definition of an IC structure is as follows.
	\begin{defn}[\textbf{IC Structure} \cite{TOJ2}]
		\label{def_ICS}
		A side information graph $\mathcal{G}$ is called a $K$-IC structure with inner vertex set $V_{I}$ $\subseteq V(\mathcal{G})$, such that $|V_{I}| = K$ if $ \mathcal{G} $ satisfies the following four conditions.
		\begin{enumerate}
			\item There is no I-Cycle in $\mathcal{G}$, where I-Cycle is defined as a cycle which contains only one inner vertex.
			\item There is a unique I-Path between any two different inner vertices in $\mathcal{G}$, where an I-path is defined as a path from one inner vertex to another inner vertex without passing through any other inner vertex (as a result, $K$ rooted trees can be drawn where each rooted tree is rooted at an inner vertex and has the remaining inner vertices as the leaves).
			\item $\mathcal{G}$ is the union of the $K$ rooted trees.
			\item There are no cycles in $ \mathcal{G} $ containing only non-inner vertices (called as outer cycles).
		\end{enumerate}
		The set of the vertices $ V(\mathcal{G})\backslash V_I $ is called the set of non-inner vertices, denoted by $ V_{NI} $. Let $ V_{NI}(i) $ be the set of non-inner vertices that are present in the rooted tree $ T_i $ of the inner vertex $ i $.
	\end{defn}
	An index code construction (presented as \textit{Construction} $ 1 $ in this paper) and a decoding algorithm (presented as \textit{Algorithm} $ 1 $ in this paper) are also given in \cite{TOJ}.
	Let the $K$-IC structure be called $\mathcal{G}$ and let $|V(\mathcal{G})|=N$. Let $ V(\mathcal{G})=\lbrace1,2,\dots,N\rbrace$, $V_I=\lbrace1,2,\dots,K)$ be the set of the $K$ inner vertices and hence $ V_{NI}=\lbrace K+1, K+2,\dots,N\rbrace $. Let $x_n \in \mathbb{F}_q$ be the message corresponding to the vertex $n \in V(\mathcal{G})$ and where $ \mathbb{F}_q $ is the finite field of characteristic $ 2 $ to which the all the $ N $ messages at the sender belong to (note that in single unicast setting, the number of messages will be equal to the number of users, i.e., $ N=M $ ).
	\begin{cons}[\cite{TOJ}]
		\label{cons1} Given the inner and non-inner vertices, the following coded symbols are transmitted.
		\begin{enumerate}
			\item An index code symbol $W_I$ obtained by XOR of messages corresponding to inner vertices is transmitted, where
			\begin{equation*}
			W_I=\underset{i=1}{\overset{K}{\bigoplus}} x_i.
			\end{equation*}
			\item An index code symbol corresponding to each non-inner vertex, obtained by XOR of message corresponding to the non-inner vertex with the messages corresponding to the vertices in the out-neighborhood of the non-inner vertex is transmitted, i.e., for $j \in V_{NI}$, $W_j$ is transmitted, where
			\begin{equation*}
			W_j=x_j \underset{q \in N^{+}_{\mathcal{G}}(j)}{\bigoplus} x_q,
			\end{equation*}
		\end{enumerate}where $ \oplus $ denotes modulo addition over $ \mathbb{F}_q $.
	\end{cons}
	\begin{algo}
		\label{algo1} It is the algorithm proposed in \cite{TOJ} to decode an index code obtained by using \textit{Construction} $ 1 $ on an IC structure, $ \mathcal{G} $.
		\begin{itemize}
			\item The message $x_j$ corresponding to a non-inner vertex $j$ is decoded directly using the transmission $W_j$ and
			\item the message $x_i$ corresponding to an inner vertex $i$ is decoded using
			\begin{displaymath}
			Z_i=W_I \underset{q \in V_{NI}(i)}{\bigoplus}W_q
			\implies
			Z_i=x_i \underset{k \in N^{+}_{T_{i}}(i)}{\bigoplus}x_k.
			\end{displaymath}
		\end{itemize}
	\end{algo}
	IC structures which contain outer cycles, i.e., side-information graphs which satisfy only the first three conditions of \textit{Definition} \ref{def_ICS} are considered in this paper. Since these are the side-information graphs considered in \cite{vikas}, the results obtained in \cite{vikas} are applicable for IC structures with outer cycles. A short summary of the results obtained in \cite{vikas} is as follows.
	\begin{itemize}
		\item Given an IC structure $ \mathcal{G} $ containing outer cycles and inner vertex set $ V_I=\lbrace1,2,\dots,K \rbrace $, for each $ i \in \lbrace1,2,\dots,K\rbrace $ and for a non-inner vertex $ j $ which is at a depth $ \geq2 $ in the rooted tree $ T_i $, define $ a_{i,j} $ as the number of vertices in $V_{NI}(i)$ for which $ j $ is in out-neighborhood in $ \mathcal{G} $, i.e., for each $ i \in \lbrace1,2,\dots,K\rbrace $ and for $j \in {V_{NI}(i) \backslash N^+_{T_i}(i)} $, $$ a_{i,j} \triangleq |\lbrace v:v\in V_{NI}(i), j\in N^{+}_{\mathcal{G}}(v)\rbrace| .$$ Also, for each $ i \in \lbrace1,2,\dots,K\rbrace $ and for a non-inner vertex $ j $ not in the rooted tree $ T_i $, define $ b_{i,j} $ as the number of vertices in $ V_{NI}(i) $ for which $ j $ is in out-neighborhood in $ \mathcal{G}$, i.e., for each $ i \in \lbrace1,2,\dots,K\rbrace $ and $ j \in V(\mathcal{G}) \backslash V(T_i) $, $$ b_{i,j} \triangleq |\lbrace v:v\in V_{NI}(i),~j\in N^{+}_{\mathcal{G}}(v)\rbrace| .$$ 
		
		\item  It is shown that $ b_{i,j} \in \lbrace 0,1\rbrace $ for each $ i \in \lbrace1,2,\dots,K\rbrace $ and $ j \in V(\mathcal{G}) \backslash V(T_i) $.
		\item The index code obtained by using \textit{Construction} $1$ on an IC structure $ \mathcal{G} $ is decodable using \textit{Algorithm} $1$ if and only if the IC structure, $ \mathcal{G} $, satisfies the following two conditions \textit{c}$1$ and \textit{c}$2$.
		\begin{cond}[\textit{c}$1$] $ a_{i,j} $ must be an odd number for each $ i \in \lbrace1,2,\dots,K\rbrace $ and $ j \in V_{NI}(i)\backslash N^+_{T_i}(i)$.  
		\end{cond}
		\begin{cond}[\textit{c}$2$] $ b_{i,j} $ must be zero for each $ i \in \lbrace1,2,\dots,K\rbrace $ and $ j \in V(\mathcal{G}) \backslash  V(T_i)$.
		\end{cond}
		\item Since \textit{c}$ 1 $ and \textit{c}$ 2 $ are true for an IC structure without outer cycles, the index code obtained by using \textit{Construction} $ 1 $ on an IC structure is decodable using \textit{Algorithm} $ 1 $.
\end{itemize}

The rest of the paper is organized as follows. The following section discusses the main results along with illustrating examples. Section \ref{sec_ex} illustrates the results using another example and Section \ref{sec_con} provides the conclusion.
\section{Main Results}
\label{sec_main}
	Let the IC structure with outer cycles be called $ \mathcal{G} $. Let the set of vertices present in an outer cycle is called as a vertex set of that outer cycle. 
\subsection{Definitions}
\begin{defn}[An outer-cycle group]
		Consider a vertex $ j $ which is present in at least two outer cycles. The subgraph which is formed by the union of all outer cycles which contain $ j $ is called the outer-cycle group of $ j $.
	\end{defn}
	\begin{example}
		Consider Fig. \ref{example1}. Let the cycles of this figure be all the  outer cycles in an IC structure $ \mathcal{G} $  (there are no other outer cycles in $ \mathcal{G} $). Vertices $ 2 $ and $ 3 $ are present in at least two outer cycles. The outer-cycle group of vertex $ 2 $ is as shown in Fig. \ref{example1}. The outer-cycle group of vertex $ 3 $ is shown in Fig. \ref{example2}.
		\begin{figure*}[!t]
			\centering
			\begin{subfigure}{0.49\textwidth}
				\centering
				\includegraphics[width=3in,height=2in,angle=0]{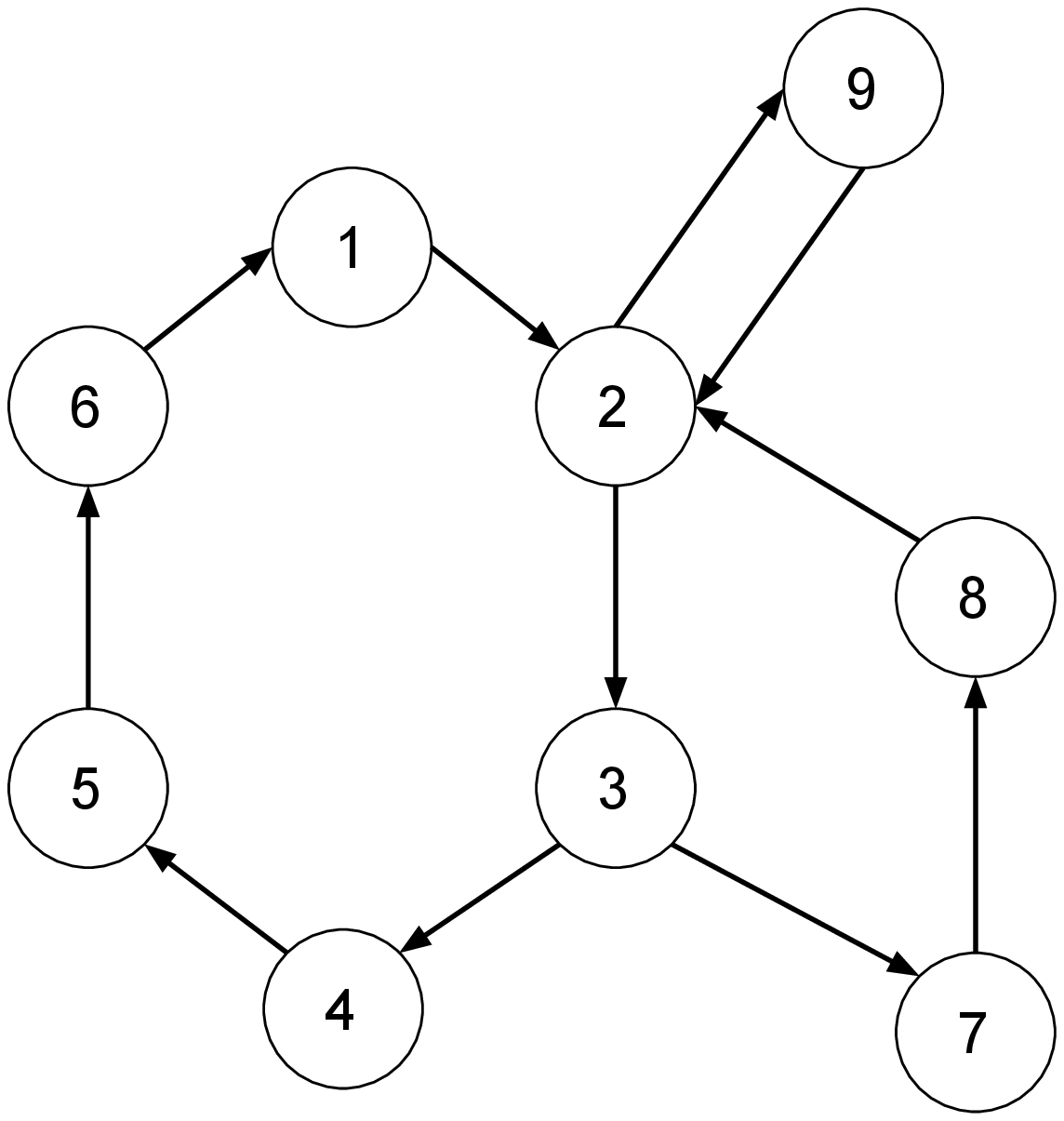}
				\caption{}
				\label{example1}
			\end{subfigure}
			\begin{subfigure}{0.49\textwidth}
				\centering
				\includegraphics[width=3in,height=2in,angle=0]{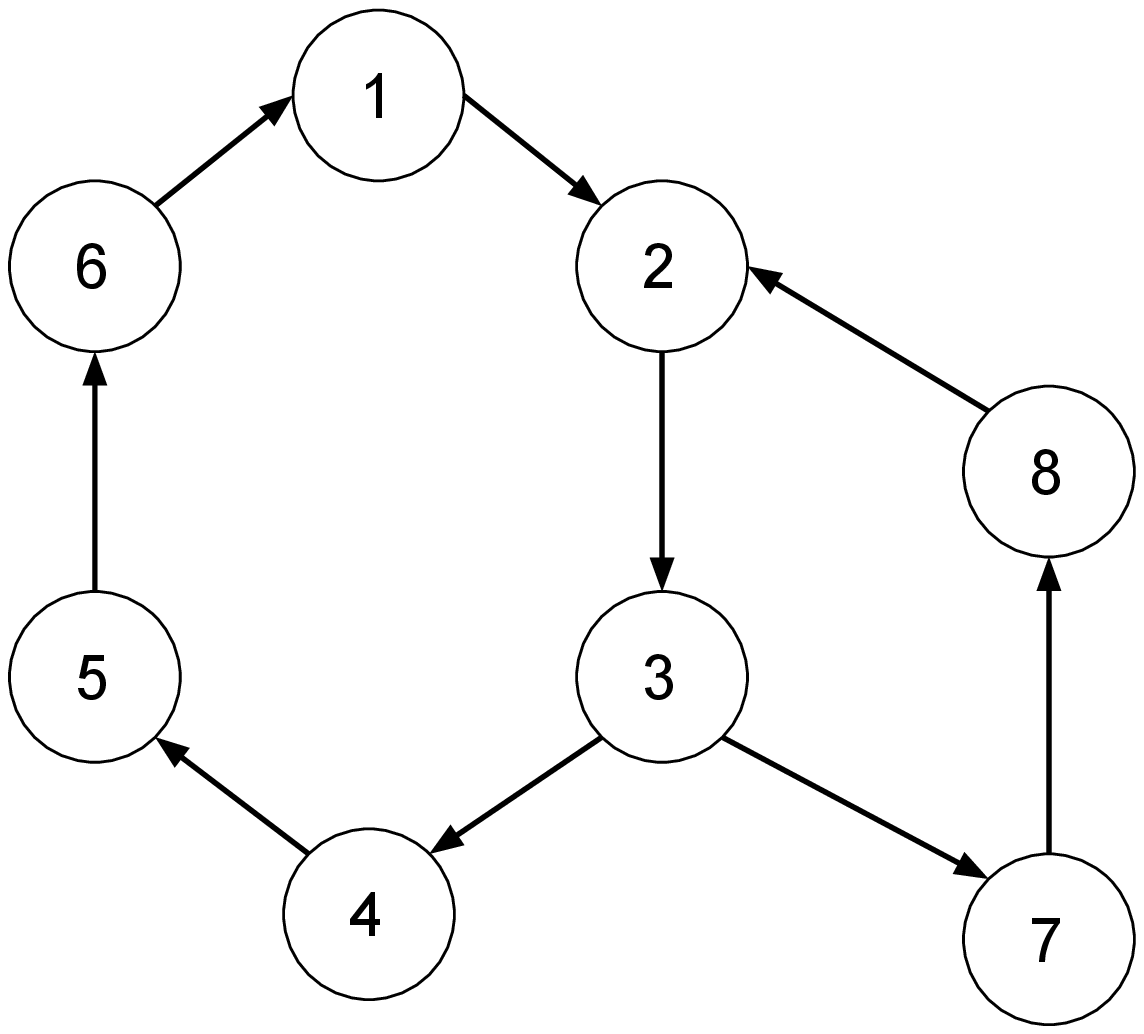}
				\caption{}
				\label{example2}
			\end{subfigure}
			\caption{Figures showing outer-cycle groups of vertices $ 2 $ and $ 3 $ respectively.}
		\end{figure*}
	\end{example}
	\begin{defn}[A maximal outer-cycle group (MOCG) and its central cycle vertex (CCV)]
		Consider a non-inner vertex $ j $ which is present in at least two outer cycles. If the outer-cycle group of $ j $ is not a subgraph of outer-cycle group of any other non-inner vertex, then the outer-cycle group of $ j $ is called a maximal outer-cycle group (MOCG) and the vertex $ j $ is called the central cycle vertex (CCV) corresponding to that MOCG. In the case where more than one non-inner vertices have same outer-cycle group and if that outer-cycle group is not subgraph of any other outer-cycle group, then the outer-cycle group is also called an MOCG and its CCV is the vertex with highest in-degree in the MOCG.
	\end{defn}
	\begin{example}
		From Fig. \ref{example1} and \ref{example2}, it can be seen that the outer-cycle group of vertex $ 3 $ is not an MOCG because it is subgraph of outer-cycle group of vertex $ 2 $. Since the outer-cycle group of vertex $ 2 $ is not a subgraph of any other outer-cycle group, Fig. \ref{example1} is an MOCG with CCV $ 2 $.
	\end{example}
	\begin{defn}[Pre-central cycle vertex set of a CCV]
		Consider an MOCG. Let the MOCG be union of $ n\geq2 $ outer cycles. Let the CCV of the MOCG be $ j $. The set of $ n $ vertices of the MOCG which have $ j $ in their out-neighborhood is called the pre-central cycle vertex set of $ j $ and the $ n $ vertices are called pre-central cycle vertices of $ j $.
	\end{defn}
	\begin{example}
		Consider the MOCG shown in Fig. \ref{example1}. The CCV of that MOCG is $ 2 $. It can be seen that the MOCG is union of $ 3 $ outer cycles and the pre-central cycle vertex set is $ \lbrace1,8,9\rbrace $.
	\end{example}
	\begin{defn}[Isolated and Non-Isolated MOCGs]
		If the outer cycles present in an MOCG are not present in any other MOCG, then the MOCG is called an isolated MOCG. Else, the considered MOCG is called a non-isolated MOCG.
	\end{defn}
	\begin{example}
		Consider Fig. \ref{non_iso}. Let the cycles be outer cycles in an IC structure $ \mathcal{G}' $. (It is assumed that there are no other outer cycles in $ \mathcal{G}' $). Vertices $ 2 $, $ 5 $ and $ 6 $ are present in at least two outer cycles. The outer-cycle group of vertex $ 2 $ is shown in Fig. \ref{example3} and the outer-cycle groups of vertices $ 5 $ and $ 6 $ are both the same and are as shown in Fig. \ref{example4}. Since, one outer-cycle group is not subgraph of the other outer-cycle group, there are two MOCGs, one with CCV $ 2 $ and one with CCV $ 6 $, in Fig. \ref{non_iso}. The MOCG with CCV $ 2 $ is as shown in Fig. \ref{example3} and the MOCG with CCV $ 6 $ is as shown in Fig. \ref{example4} (CCV is $ 6 $ because $ 6 $ has an in-degree of $ 2 $ and $ 5 $ has an in-degree of $ 1 $ in the MOCG).
		\begin{figure*}[!t]
			\centering
			\begin{subfigure}{0.31\textwidth}
				\centering
				\hspace*{-12mm}
				\includegraphics[width=3in,height=2in,angle=0]{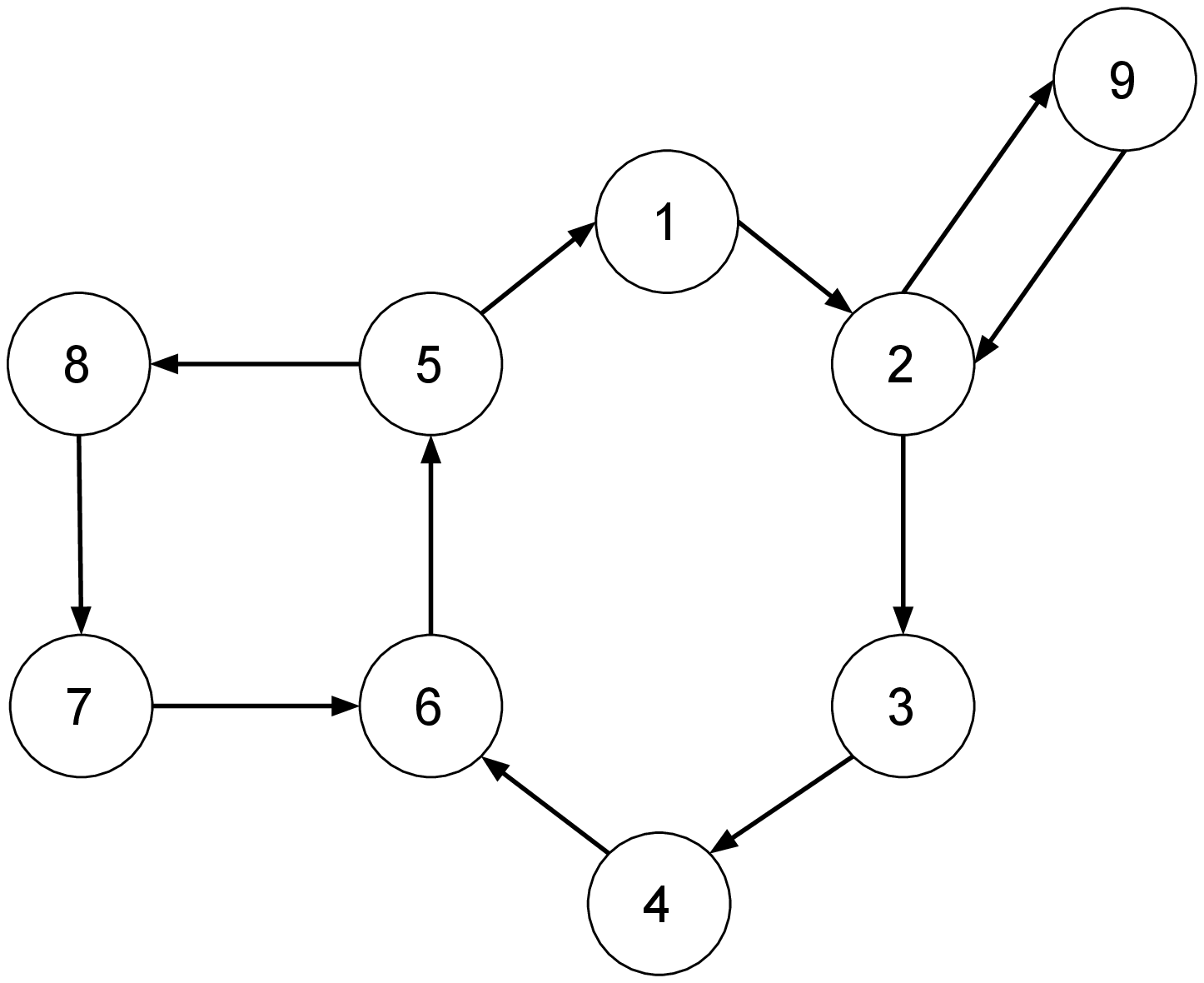}
				\caption{Figure illustrating two non-isolated MOCGs.}
				\label{non_iso}
			\end{subfigure}
			\begin{subfigure}{0.31\textwidth}
				\centering
				\hspace*{-8mm}
				\includegraphics[width=3in,height=2in,angle=0]{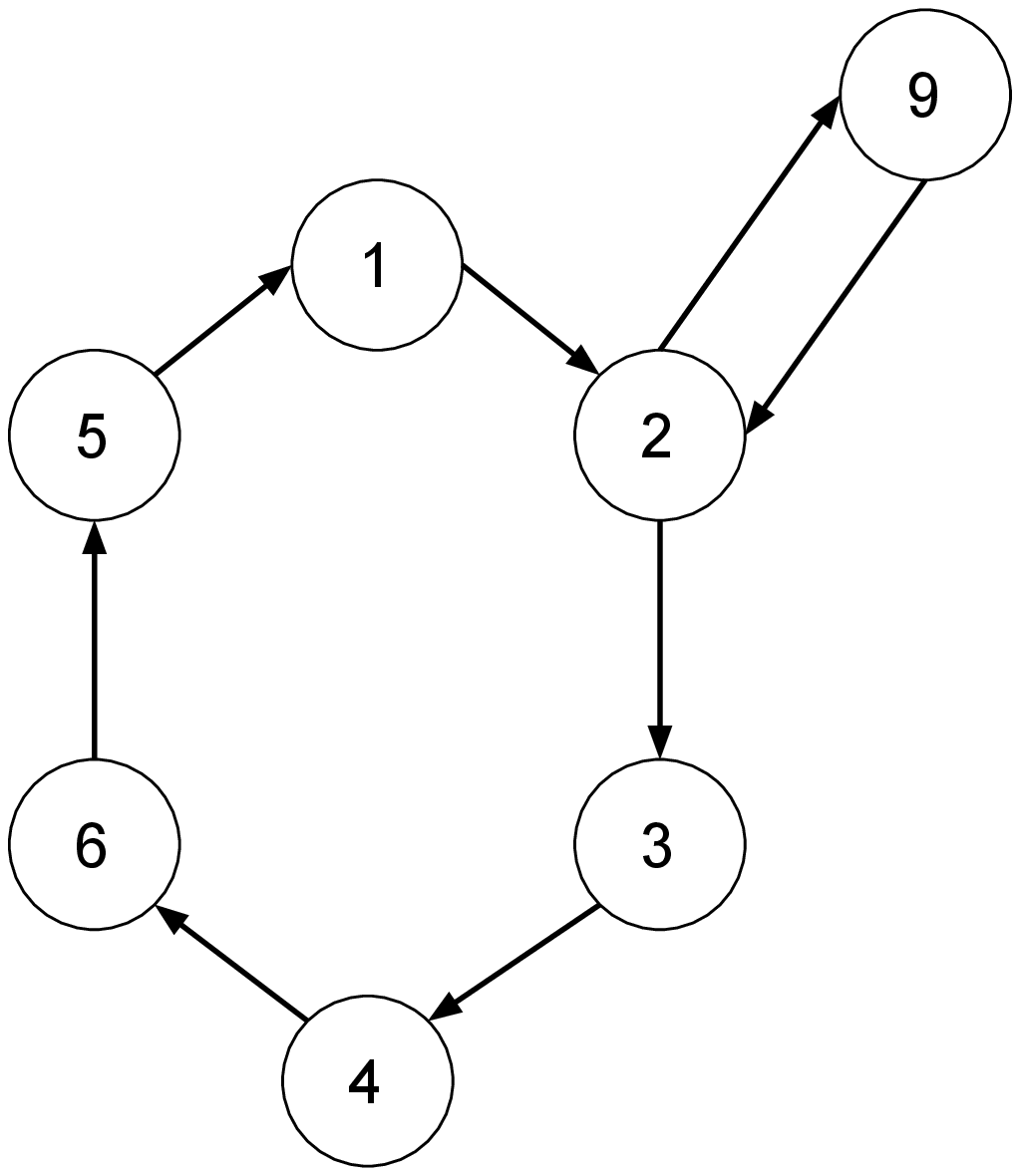}
				\caption{outer-cycle group of vertex $ 2 $.}
				\label{example3}
			\end{subfigure}
			\begin{subfigure}{0.31\textwidth}
				\centering
				\hspace*{-16mm}
				\includegraphics[width=3in,height=2in,angle=0]{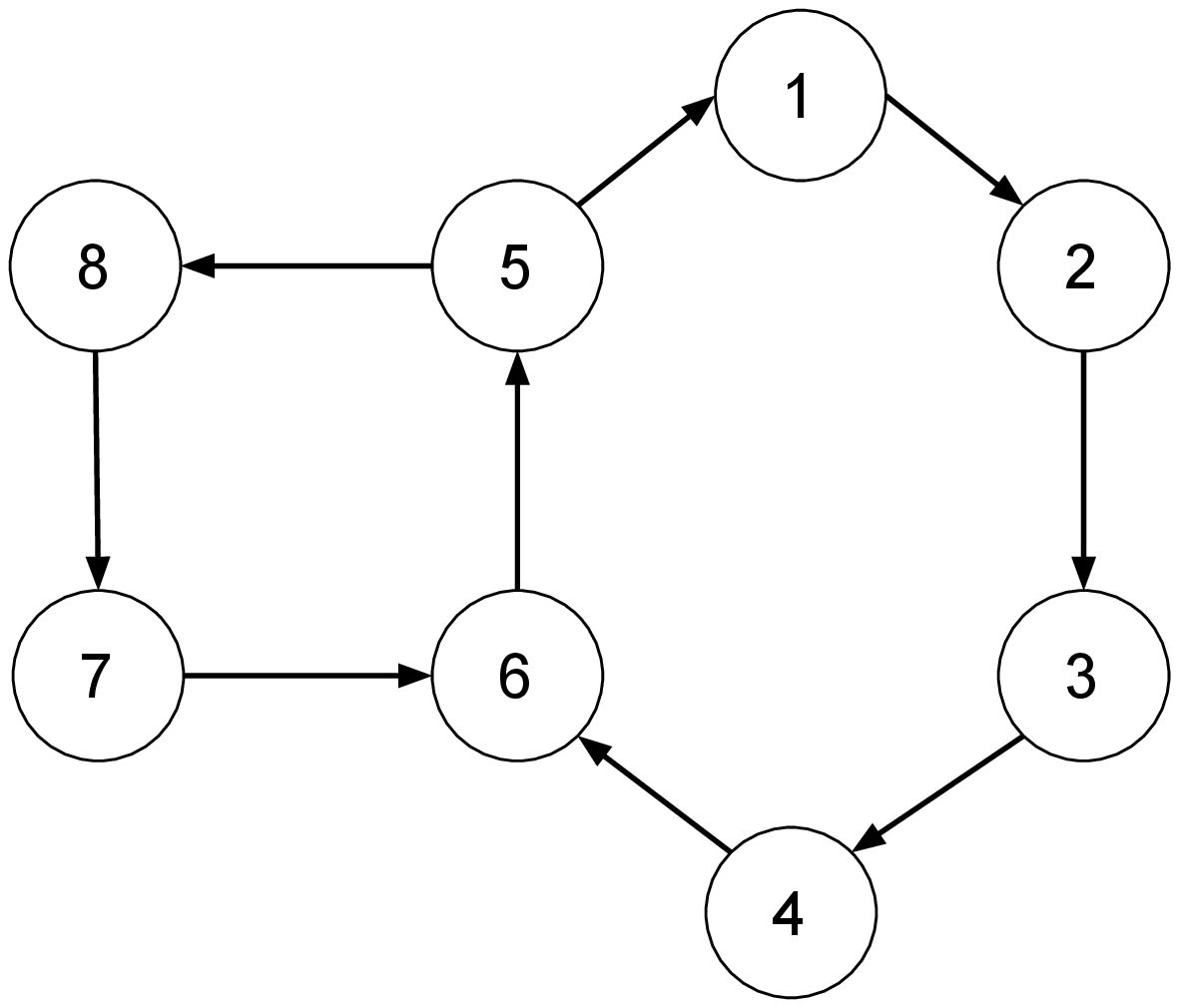}
				\caption{outer-cycle group of vertices $ 5$ and $6 $.}
				\label{example4}
			\end{subfigure}
			\caption{}
		\end{figure*}
		It can be observed that the outer cycle with vertex set $ \lbrace1,2,3,4,5,6\rbrace $ is common to both the MOCGs. Hence the two MOCGs, one with CCV $ 2 $ and one with CCV $ 6 $, are non-isolated MOCGs.
	\end{example}
\begin{lm}
	An MOCG will have a unique CCV.
\end{lm}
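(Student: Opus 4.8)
\textit{Proof plan.}
The proof splits according to how the CCV of an MOCG $\mathcal{M}$ is declared. If $\mathcal{M}$ is the outer-cycle group of exactly one non-inner vertex $j$, then by definition $j$ is the CCV and there is nothing to prove; this case is also disjoint from the other, since the second defining clause applies only when \emph{more than one} non-inner vertex has $\mathcal{M}$ as its outer-cycle group. Hence I would reduce to the remaining case: $\mathcal{M}$ is the common outer-cycle group of non-inner vertices $j_1,\dots,j_m$ with $m\ge 2$, and the CCV is declared to be the vertex of highest in-degree in $\mathcal{M}$. The content of the lemma is then exactly that the maximum in-degree in $\mathcal{M}$ is attained by a unique vertex.

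First I would pin down that maximum. Write $\mathcal{M}=C_1\cup\cdots\cup C_n$, where $C_1,\dots,C_n$ are the $n\ge 2$ outer cycles through $j_1$ (the constituent cycles referred to in the definition of the pre-central cycle vertex set). Every edge of $\mathcal{M}$ lies on some $C_\ell$, and a simple cycle $C_\ell$ contributes at most one in-edge to any fixed vertex, so every vertex of $\mathcal{M}$ has in-degree at most $n$; on the other hand the CCV has at least $n$ distinct in-neighbours, namely its pre-central cycle vertices, so the maximum in-degree in $\mathcal{M}$ is exactly $n$. Next I would note that any vertex $v$ of in-degree $n$ lies on all of $C_1,\dots,C_n$ (one in-edge per cycle) with pairwise distinct predecessors, hence $v$ lies on at least two outer cycles and its own outer-cycle group contains $\mathcal{M}$; by maximality of $\mathcal{M}$ it then equals $\mathcal{M}$, so $v\in\{j_1,\dots,j_m\}$. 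Thus it suffices to rule out two distinct vertices $j,j'$ among the $j_t$ both having in-degree $n$. Fixing the cyclic orientation so that $j$ precedes $j'$ on every $C_\ell$, write $C_\ell=F_\ell G_\ell$ with $F_\ell$ the arc from $j$ to $j'$ and $G_\ell$ the arc from $j'$ to $j$; then in-degree $n$ at $j'$ forces the $F_\ell$ to be pairwise distinct, and in-degree $n$ at $j$ forces the $G_\ell$ to be pairwise distinct.

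The main obstacle is to extract a contradiction from the coexistence of two distinct arcs $F_1\ne F_2$ from $j$ to $j'$ inside $\mathcal{M}\subseteq\mathcal{G}$. I would first argue that no $F_\ell$ and $G_{\ell'}$ can share an internal vertex, since such a vertex would splice out an outer cycle passing through $j'$ but not through $j$, which is incompatible with $j$ and $j'$ having the same outer-cycle group $\mathcal{M}$; then I would push $F_1$ and $F_2$ ``outward'' through the rooted-tree decomposition of $\mathcal{G}$: each edge of $F_1$ and of $F_2$ lies on a root-to-leaf path of some $T_i$, and, using $b_{i,j}\in\{0,1\}$ from \cite{vikas} to control how sub-paths coming from different trees may be concatenated at a vertex lying outside a given tree, one obtains two distinct directed paths between a fixed pair of inner vertices avoiding all other inner vertices — two distinct I-paths — contradicting Condition~2 of Definition~\ref{def_ICS}. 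This tree-structure argument is the delicate step, and it is genuinely needed: a bare cycle count does not suffice, because the configuration with cycles $F_1G_1,F_1G_2,F_2G_1,F_2G_2$ is perfectly consistent as an abstract digraph and is excluded only by the uniqueness of I-paths. Once this contradiction is established, at most one vertex of $\mathcal{M}$ has in-degree $n$, so the CCV is unique.
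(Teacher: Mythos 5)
Your overall route is genuinely different from the paper's. The paper first proves two structural claims: (a) the vertices whose outer-cycle group equals the MOCG are exactly the vertices lying on all $n$ constituent cycles, and (b) the subgraph they form must be a single directed path $v_1\rightarrow v_2\rightarrow\cdots\rightarrow v_m$ (any other acyclic shape would create parallel paths, asserted to contradict I-path uniqueness, and a cyclic shape would put some candidate on more than $n$ outer cycles); it then reads off that $v_1$ has in-degree $n\geq 2$ while $v_2,\dots,v_m$ have in-degree $1$, so the CCV is uniquely $v_1$. You instead argue by in-degree counting over the cycle decomposition and try to exclude two candidates both attaining in-degree $n$ via distinct arcs $F_1\neq F_2$ from $j$ to $j'$. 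In outline this is viable, and you have correctly identified that the load-bearing fact in either approach is that parallel non-inner paths are incompatible with Conditions 1--3 of Definition~\ref{def_ICS} (the paper itself invokes this in its Claim 2 with essentially no more justification than you give).

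There are, however, two concrete gaps. First, your step ``the CCV has at least $n$ distinct in-neighbours, namely its pre-central cycle vertices, so the maximum in-degree is exactly $n$'' is circular: the pre-central cycle vertex set is defined as a set of $n$ vertices only because the CCV is already presumed to have $n$ distinct in-edges, one per cycle, and it is precisely this kind of well-definedness that the lemma must deliver. A priori two of the $n$ cycles could enter a candidate along the same edge $u\rightarrow j_1$, in which case the maximum in-degree could be smaller than $n$ and attained by several vertices, a situation your argument (which only excludes ties at in-degree exactly $n$) does not address; the fix is to note that a shared in-edge yields two distinct non-inner paths from $j_1$ to $u$, i.e., the same forbidden configuration, but that has to be said. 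Second, the crux step itself is only asserted: invoking $b_{i,j}\in\{0,1\}$ from \cite{vikas} ``to control how sub-paths are concatenated'' does not explain which pair of inner vertices hosts the two I-paths, why the concatenated walks are simple paths, or what happens when the entry and exit inner vertices coincide (there the contradiction is with Condition 1, an I-cycle, not Condition 2). The natural completion is to take a non-inner-internal path from some tree root $a$ to $j$ (a prefix of an I-path containing $j$) and from $j'$ to some leaf $b$, concatenate these with $F_1$ and with $F_2$, and then handle vertex repetitions and the case $a=b$; none of this is in your plan, and the tool you name does not obviously substitute for it. A smaller soft spot: your claim that $F_\ell$ and $G_{\ell'}$ sharing an internal vertex yields ``an outer cycle through $j'$ but not $j$, incompatible with equal outer-cycle groups'' is not immediate either, since such a cycle could in principle be a subgraph of $\mathcal{M}$ without changing the union defining $j'$'s outer-cycle group. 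So the plan is salvageable and arguably no less rigorous at its core than the paper's own Claim 2, but as written it does not close the argument.
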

\begin{proof}
	MOCGs can be categorized into two types. MOCG which is an outer-cycle group of a unique non-inner vertex and MOCG which is an outer-cycle group of more than one non-inner vertices. In the former case CCV is unique.\\In the latter case, consider that an MOCG $ \mathcal{G} $ is union of $ n \geq 2 $ outer cycles and let $ m \geq2 $ non-inner vertices have $ \mathcal{G} $ as their outer-cycle group.
	\begin{claim}
		There is a subgraph $ \mathcal{S} $ of $ \mathcal{G} $ which is formed only by those $ m $ vertices and $ \mathcal{S} $ is a subgraph of all the $ n $ outer cycles.
	\end{claim}
\begin{proof}
	Let the $ m $ vertices be $ v_1$, $ v_2 $, $ \dots $, $ v_m $. Suppose $ \mathcal{S} $ does not contain the vertex $ v_m $. This means $ v_m $ is not present in at least one of the $ n $ outer cycles. Hence outer cycle group of $ v_m $ cannot consist of all those $ n $ outer cycles since $ v_m $ is not present in at least one of the $ n $ outer cycles. This is a contradiction.
	So all the $ m $ vertices are present in $ \mathcal{S} $.
	Now, if there is an $ m+1^{th} $ vertex $ v_{m+1} $ in $ \mathcal{S} $, then it means that $ v_{m+1} $ is present in all the $ n $ outer cycles and there would be $ m+1 $ non-inner vertices which have $ \mathcal{G} $ as their outer-cycle group. This is not possible since it is given that there are $ m $ non-vertices that have $ \mathcal{G} $ as their outer-cycle group.
\end{proof}
	\begin{claim}
		In $ \mathcal{G} $, the subgraph $ \mathcal{S} $ is always of the form 
		\begin{equation}
		\label{exp1}
		v_1 \rightarrow v_2 \rightarrow \dots \rightarrow v_{m-1}\rightarrow v_m
		\end{equation}, with appropriate labelling of the $ m $ vertices.
	\end{claim}
 	\begin{proof}
	Any acyclic structure except the structure in expression \ref{exp1} cannot be formed by the $ m $ vertices, since in that case, parallel paths will exist between at least one pair of vertices among the $ m $ vertices which implies that uniqueness of I-path will be violated.\\Any cyclic structure cannot be formed among the $ m $ vertices, since in that case, some of the $ m $ vertices will be present in more than $ n $ outer cycles which is not possible since each of the $ m $ vertices are present in only $ n $ outer cycles.
	Hence the only possibility of the form of $ \mathcal{S} $ is as shown in expression \ref{exp1}.
  	\end{proof}
   Since the form of $ \mathcal{S} $ is fixed as in expression \ref{exp1} (with the assumption of appropriate labelling), and since $ \mathcal{S} $ is present in all the $ n $ outer cycles, $ v_1 $ should have in-degree equal to $ n $ and all the other vertices $ v_2 $, $ v_3 $, $ \dots $, $ v_m $ will have in-degree equal to $ 1 $. Since $ n\geq 2 $, $ v_1 $ will always have the highest in-degree and hence $ v_1 $ will be the CCV of $ \mathcal{G} $. Thus a unique CCV exists for an MOCG.
\end{proof}
	\subsection{\textit{Construction} $ 2 $}
	Let the given IC structure with outer cycles be $ \mathcal{G} $.
 Let\\
 $ V_I=\lbrace1,2,\dots,K\rbrace $ be the set of inner vertices,\\ $ V_{NI}=\lbrace K+1,K+2,\dots,N\rbrace $ be the set of non-inner vertices,\\
$ V_{OC} $ be the union of the vertex sets corresponding to all of the outer cycles present in $ \mathcal{G} $,\\ $ V_{MOCG} $ be the union of vertex sets of outer cycles present in any of the MOCGs,\\$ V_{OC}(j) $ denote the union of vertex sets corresponding to the outer cycles in the outer-cycle group of $ j $ (i.e., vertices of those outer cycles which contain $ j $),\\$ N^+_C(j) $ denote the set of vertices in $ V_{OC}(j) \cap N^+_{\mathcal{G}}(j) $ (it is the out-neighborhood of $ j $ in its outer-cycle group)\\$ V_{OCGI} $ be the union of all the isolated MOCGs.\\ \\
	The index code construction is as follows.
	\begin{cons}
		\label{cons2}
		\begin{enumerate}
\item An index code symbol $ W_I $ obtained by XOR of messages corresponding to inner vertices is transmitted, where 
			\begin{displaymath}
			W_I=\underset{i=1}{\overset{K}{\bigoplus}}x_i.
			\end{displaymath}
\item For non-inner vertices not in any outer cycle, an index code symbol corresponding to each such non-inner vertex, obtained by XOR of message corresponding to the non-inner vertex with the messages corresponding to the vertices in the out-neighborhood excluding non-inner vertices that are part of outer cycles of the non-inner vertex is transmitted, i.e., for $ j \in V_{NI}\backslash V_{OC} $,
\begin{displaymath} 
			W_j = x_j \underset{q\in N^+_{\mathcal{G}}(j)\backslash V_{OC}}{\bigoplus} x_q. 
\end{displaymath}
\item Consider the isolated MOCGs and non-isolated MOCGs which are union of even number of outer cycles.	For each such MOCG
			\begin{itemize}
				\item Let the CCV of the MOCG be $ j_1 $ and let $ j $ be an arbitrary element in pre-central cycle vertex set of $ j_1 $.
\item For the vertex $ j $, $ W_j $ is transmitted, where
				\begin{displaymath}
				W_{j} = x_{j} \underset{q \in \lbrace N^+_{\mathcal{G}}(j) \backslash V_{OC}\rbrace \cup \lbrace N^+_C(j) \backslash \lbrace j_1\rbrace \rbrace}{\bigoplus}x_q.
				\end{displaymath}
			\end{itemize}	
\item Consider the non-isolated MOCGs which are union of odd number of outer cycles. For each such MOCG
				\begin{itemize}
\item Let the CCV of the MOCGs be $ j_1 $. 
\item Let $ PV(j_1) $ be the set of two arbitrary elements in pre-central cycle vertex set of $ j_1 $.
\item For each $j \in PV(j_1) $, $ W_j $ is transmitted, where
				\begin{displaymath}
				W_j= x_{j} \underset{q \in \lbrace N^+_{\mathcal{G}}(j) \backslash V_{OC}\rbrace \cup \lbrace N^+_C(j) \backslash \lbrace j_1\rbrace \rbrace}{\bigoplus}x_q.
				\end{displaymath}
\end{itemize}
\item For all other non-inner vertices $ j $, $ W_j $ is transmitted, where
			\begin{displaymath}
			W_j =  x_j \underset{q\in \lbrace N^+_{\mathcal{G}}(j) \backslash V_{OC} \rbrace \cup N^+_{C}(j)}{\bigoplus}x_q.
			\end{displaymath}
\end{enumerate}

		Since an index code symbol is transmitted for every non-inner vertex and one index code symbol is transmitted for all the inner vertices, the length of the index code obtained is $ N-K+1 $.
\end{cons}
\begin{ex}
\label{ex_1}
\begin{figure}
			\centering
			\includegraphics[width=\columnwidth,height=\columnwidth,angle=0]{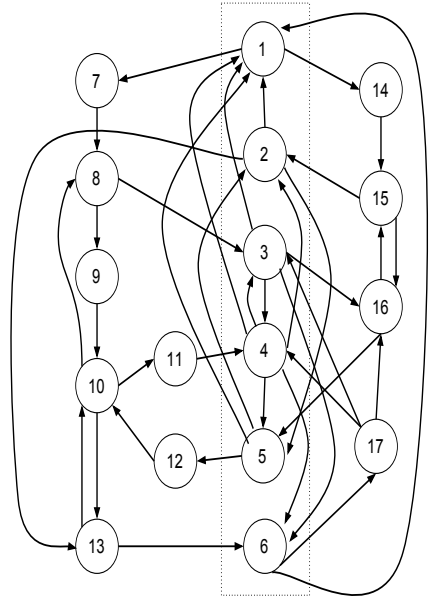}
			\caption{$ 6 $-IC structure $ \mathcal{G}_1 $ with outer cycles.}
			\label{fig1}
		\end{figure}
		\begin{figure*}[!t]
			\centering
			\begin{subfigure}{.31\textwidth}
				\centering
				\includegraphics[width=15pc]{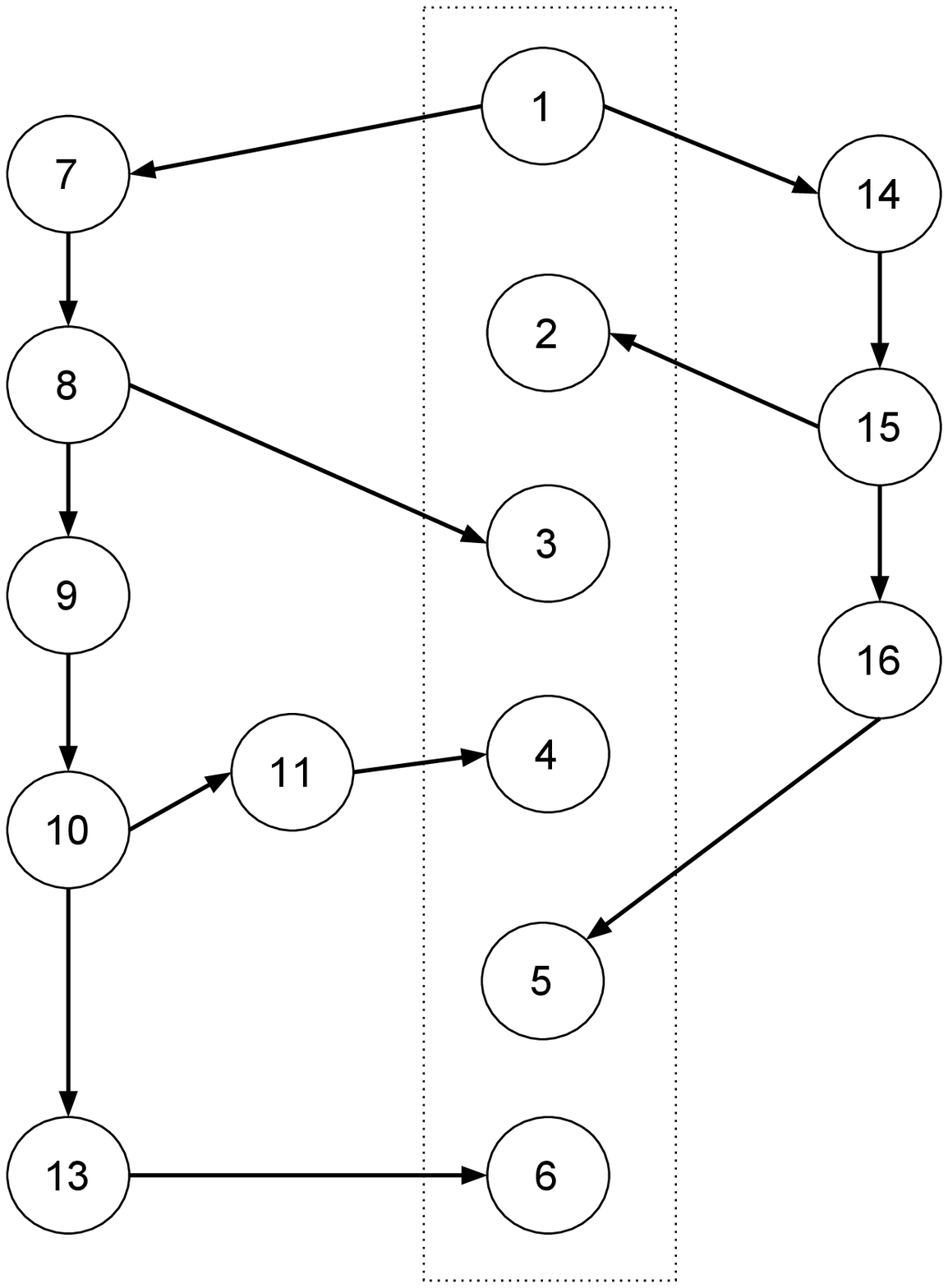}
				\caption{}
				\label{rt11}
			\end{subfigure}%
			\begin{subfigure}{.31\textwidth}
				\centering
				\includegraphics[width=15pc]{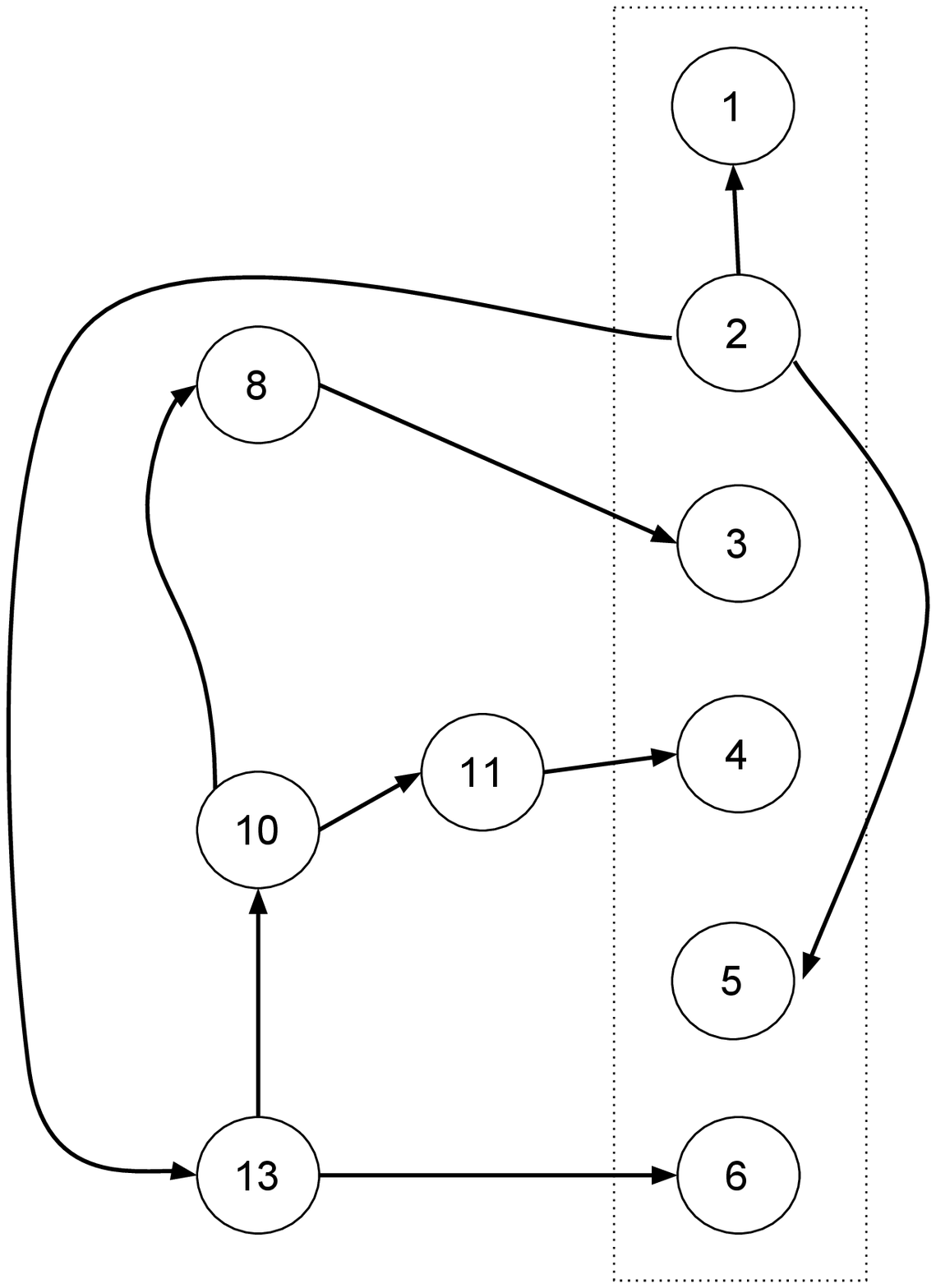}
				\caption{}
				\label{rt12}
			\end{subfigure}
			\begin{subfigure}{.31\textwidth}
				\centering
				\includegraphics[width=15pc]{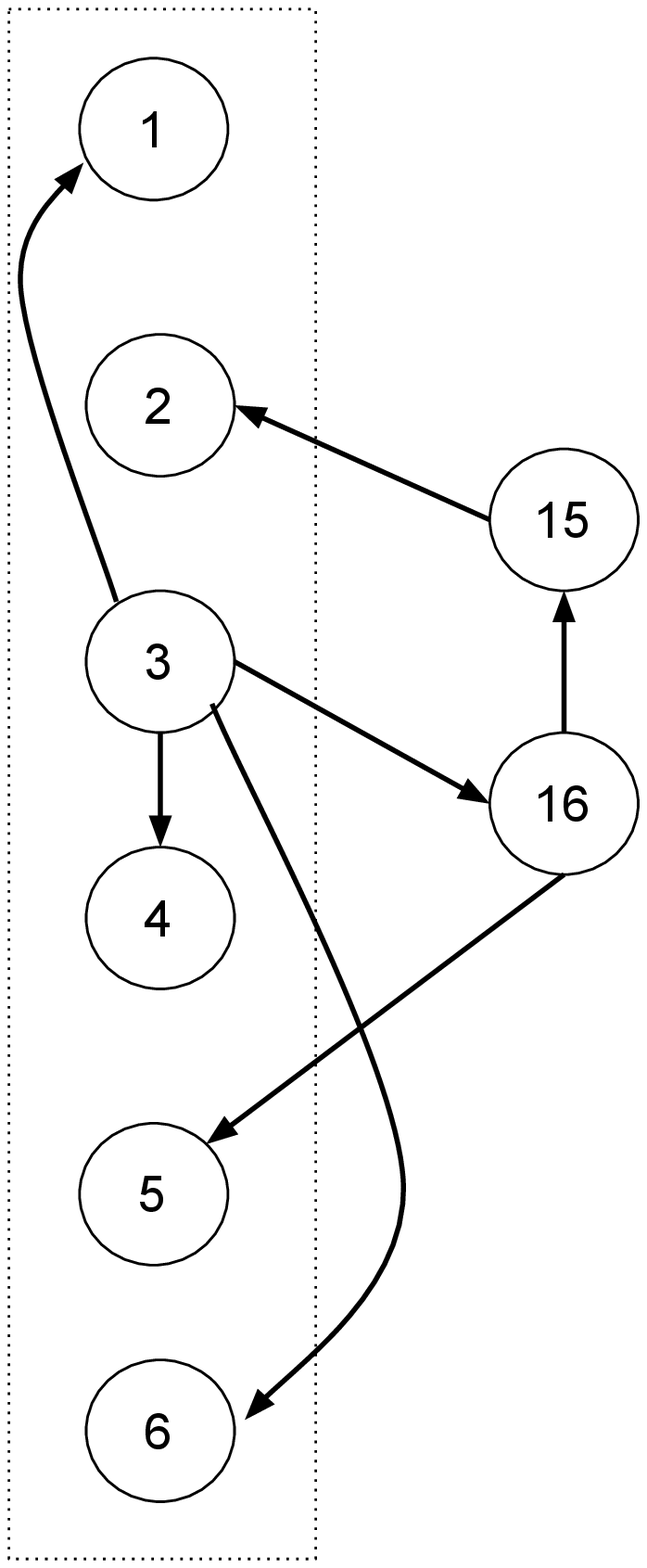}
				\caption{}
				\label{rt13}
			\end{subfigure}
			\begin{subfigure}{.31\textwidth}
				\centering
				\includegraphics[width=15pc]{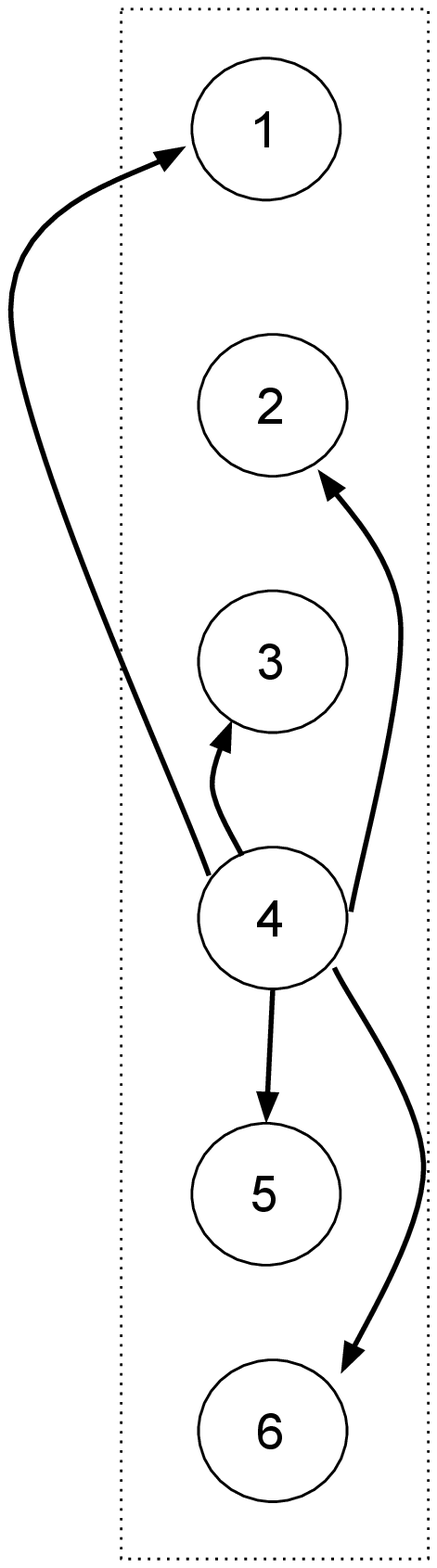}
				\caption{}
				\label{rt14}
			\end{subfigure}%
			\begin{subfigure}{.31\textwidth}
				\centering
				\includegraphics[width=15pc]{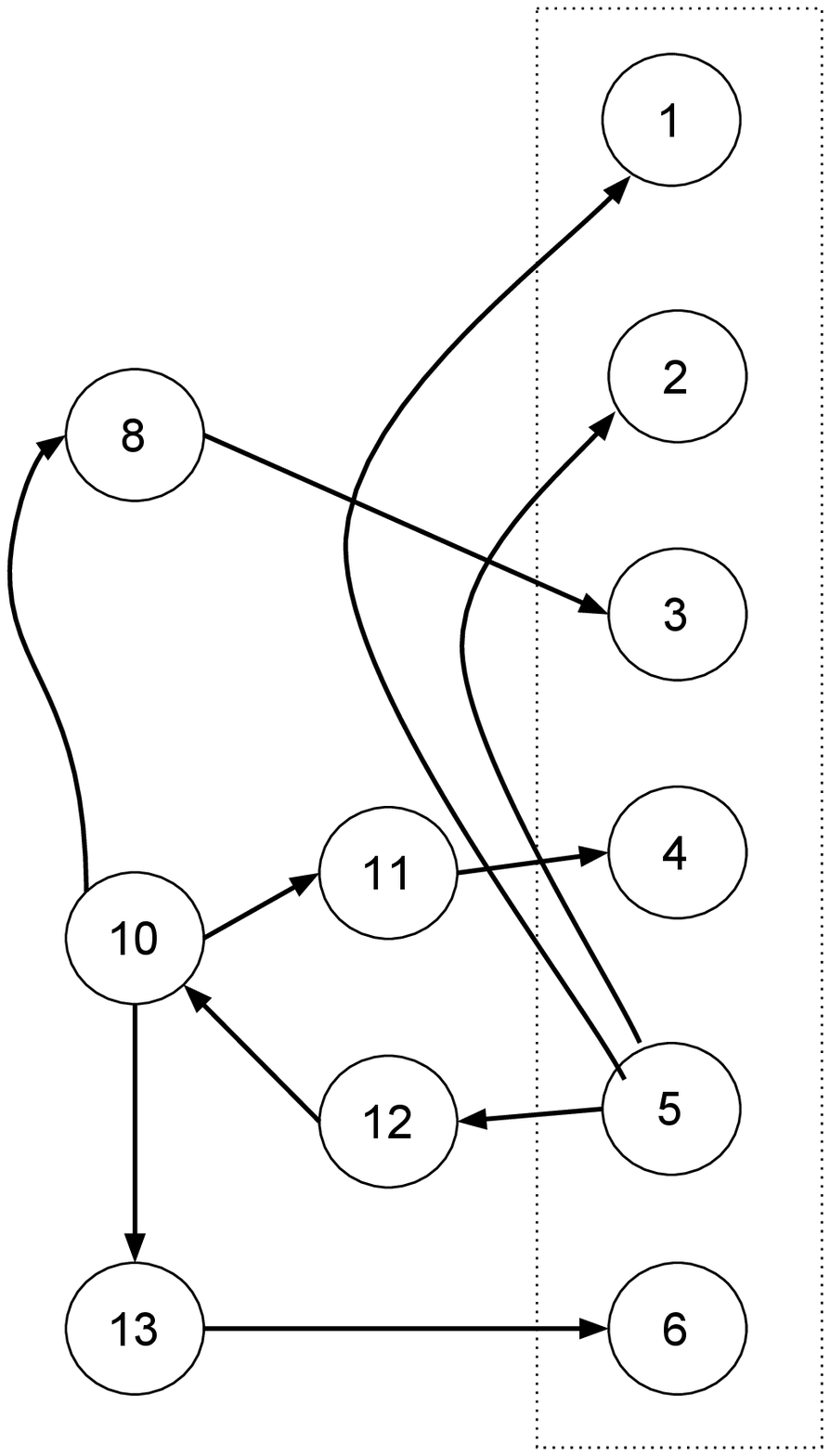}
				\caption{}
				\label{rt15}
			\end{subfigure}
			\begin{subfigure}{.31\textwidth}
				\centering
				\includegraphics[width=15pc]{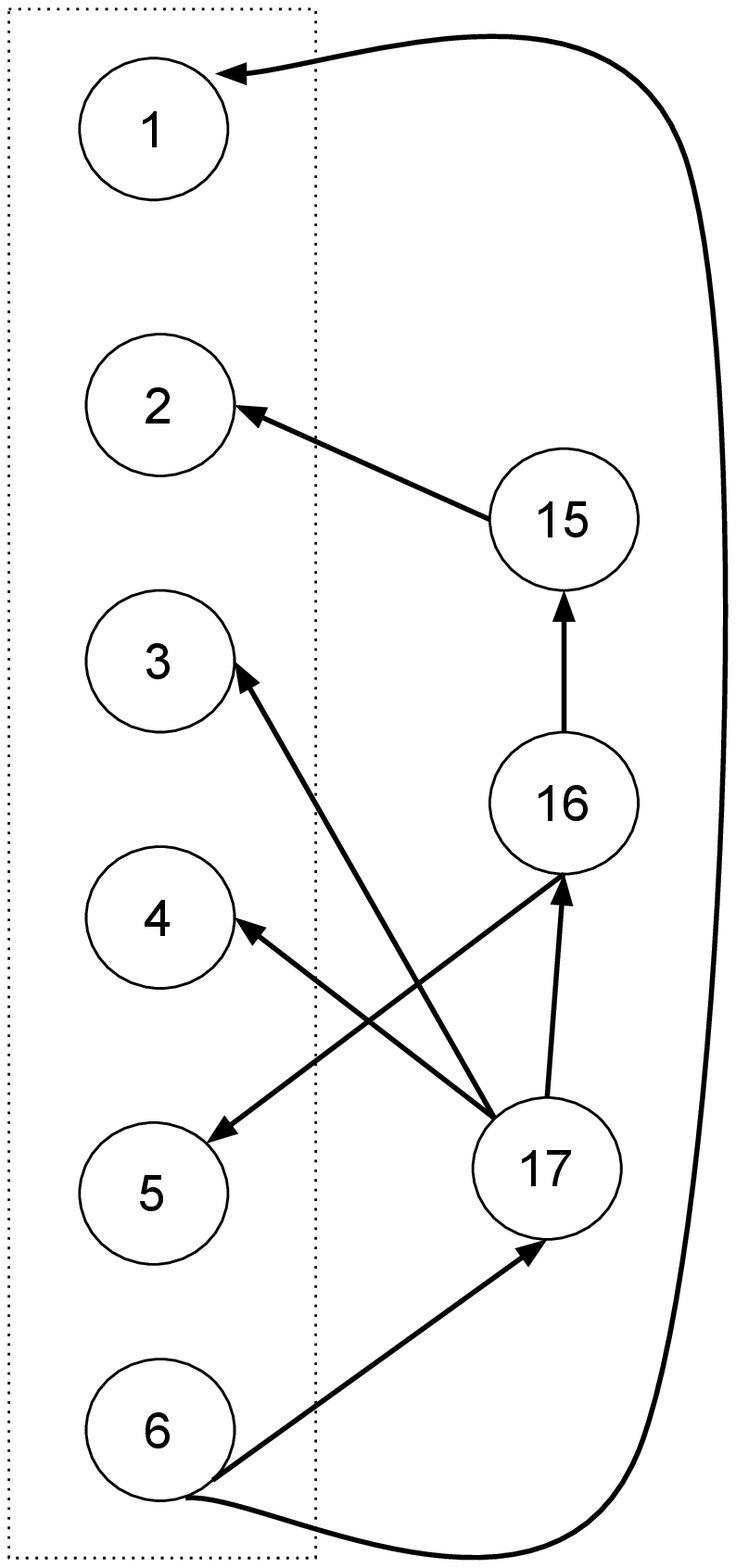}
				\caption{}
				\label{rt16}
			\end{subfigure}
			\caption{Figures showing rooted trees of inner vertices $ 1,2,3,4,5,6 $ of $ \mathcal{G}_1 $, respectively.}
		\end{figure*}
		Consider $ \mathcal{G}_1 $, a side information graph which is a $ 6 $-IC structure with inner vertex set $ V_I = \lbrace1,2,3,4,5,6\rbrace $ given in Fig. \ref{fig1}. It can be verified that 
\begin{enumerate}
			\item there are no cycles containing only one vertex from the set $\lbrace1,2,3,4,5,6\rbrace$ in $\mathcal{G}_1$ (i.e., no I-cycles),
			\item using the rooted trees for each vertex in the set $\lbrace1,2,3,4,5,6\rbrace$, which are given in Fig. \ref{rt11}, \ref{rt12}, \ref{rt13}, \ref{rt14}, \ref{rt15} and \ref{rt16} respectively, there exists a unique path between any two different vertices in $ V_I $ in $ \mathcal{G}_1 $ and does not contain any other vertex in $ V_I $ (i.e., unique I-path between any pair of inner vertices),
			\item $\mathcal{G}_1$ is the union of all the $6$ rooted trees.
		\end{enumerate}
		Now, we have\\
		$ V_I = \lbrace 1,2,3,4,5,6 \rbrace. $\\
$ 		V_{NI} =  \lbrace7,8,9,10,11,12,13,14,15,16,17\rbrace $.\\
$ 	V_{OC}=\lbrace 8,9,10,13,15,16 \rbrace $.\\
$ 		V_{MOCG}=\lbrace 8,9,10,13\rbrace $.\\
		 There is one isolated MOCG with CCV $ 10 $, formed by the union of two outer cycles with vertex sets $ \lbrace 8,9,10 \rbrace $ and $ \lbrace 10,13 \rbrace$. Since there are no other MOCGs, $ V_{OCGI}=V_{MOCG} $. \textit{Construction} \ref{cons2} is carried out as follows.
\begin{enumerate}
			\item $ W_I= x_1\oplus x_2\oplus x_3\oplus x_4\oplus x_5\oplus x_6 $.
			\item $ V_{NI}\backslash V_{OC} = \lbrace7,11,12,14,17\rbrace $.
			\begin{itemize}
				\item $ j=7 $ and $ N^+_{\mathcal{G}_1}(7)\backslash V_{OC}= \phi $. Hence $ W_7=x_7 $.
				\item $ j=11 $ and $ N^+_{\mathcal{G}_1}(11)\backslash V_{OC}= \lbrace4\rbrace $. Hence $ W_{11}=x_{11}\oplus x_4 $.
				\item $ j=12 $ and $ N^+_{\mathcal{G}_1}(12)\backslash V_{OC}= \phi $. Hence $ W_{12}=x_{12} $.
				\item $ j=14 $ and $ N^+_{\mathcal{G}_1}(14)\backslash V_{OC}= \phi $. Hence $ W_{14}=x_{14} $.
				\item $ j=17 $ and $ N^+_{\mathcal{G}_1}(17)\backslash V_{OC}= \lbrace3,4\rbrace $. Hence $ W_{17}=x_{17}\oplus x_3\oplus x_4 $.
			\end{itemize}
			\item The isolated MOCG with CCV $ 10 $ is union of even number of cycle sets.
			\begin{itemize}
				\item $ j_1=10 $. $ j $ can be $ 9 $ or $ 13 $. Let $ j=9 $ (The case of $ j=13 $ is considered in the end).\\
				$ N^+_{\mathcal{G}}(9) \backslash V_{OC}  = \phi$.\\
				$ N^+_{C}(9)=\lbrace10\rbrace $. Hence\\ $  \lbrace N^+_{\mathcal{G}}(9) \backslash V_{OC}\rbrace \cup \lbrace N^+_{C}(9) \backslash  \lbrace 10\rbrace \rbrace = \phi $. So,\\ $ W_9=x_9 $.
			\end{itemize}
			\item Step $ 4 $ can be skipped as there are no non-isolated MOCGs.
			\item The remaining non-inner vertices are $ \lbrace 8,10,13,15,16 \rbrace $.
			\begin{itemize}
				\item $ j=8 $. $  N^+_{\mathcal{G}}(8) \backslash V_{OC}  = \lbrace3\rbrace $ and $ N^+_{C}(8) = \lbrace 9\rbrace $ and hence $ \lbrace N^+_{\mathcal{G}}(8) \backslash V_{OC} \rbrace \cup N^+_{C}(8)= \lbrace3,9\rbrace $. So, $ W_8=x_8\oplus x_3\oplus x_9 $.
				\item $ j=10 $. $  N^+_{\mathcal{G}}(10) \backslash V_{OC}  =\lbrace 11\rbrace $ and $ N^+_{C}(10) = \lbrace 8,13 \rbrace $ and hence $ \lbrace N^+_{\mathcal{G}}(10) \backslash V_{OC} \rbrace \cup N^+_{C}(10)= \lbrace8,11,13\rbrace $. So, $ W_{10}=x_{10}\oplus x_8\oplus x_{11}\oplus x_{13} $.
				\item $ j=13 $. $  N^+_{\mathcal{G}}(13) \backslash V_{OC}  = \lbrace 6 \rbrace$ and $ N^+_{C}(13) = \lbrace 10\rbrace $ and hence $ \lbrace N^+_{\mathcal{G}}(13) \backslash V_{OC} \rbrace \cup N^+_{C}(13)= \lbrace6,10\rbrace $. So, $ W_{13}=x_{13}\oplus x_6\oplus x_{10} $.
				\item $ j=15 $. $  N^+_{\mathcal{G}}(15) \backslash V_{OC}  = \lbrace 2 \rbrace $ and $ N^+_{C}(15) = \lbrace 16\rbrace $ and hence $ \lbrace N^+_{\mathcal{G}}(15) \backslash V_{OC} \rbrace \cup N^+_{C}(15)= \lbrace2,16\rbrace $. So, $ W_{15}=x_{15}\oplus x_2\oplus x_{16} $.
				\item $ j=16 $. $  N^+_{\mathcal{G}}(16) \backslash V_{OC}  = \lbrace 5 \rbrace $ and $ N^+_{C}(16) = \lbrace 15\rbrace $ and hence $ \lbrace N^+_{\mathcal{G}}(16) \backslash V_{OC} \rbrace \cup N^+_{C}(16)= \lbrace5,15\rbrace $. So, $ W_{16}=x_{16}\oplus x_5\oplus x_{15} $.
			\end{itemize}
\end{enumerate}

The obtained index code symbols are listed as follows.\\
\hspace*{1.0cm} $ 			 W_I= x_1\oplus x_2\oplus x_3\oplus x_4\oplus x_5\oplus x_6. $\\
\hspace*{1.0cm} $ 			 W_7=x_7 . $\\
\hspace*{1.0cm} $ 			 W_8=x_8\oplus x_3\oplus x_9 . $\\
\hspace*{1.0cm} $ 			 W_9=x_9 . $\\
\hspace*{1.0cm} $ 			 W_{10}=x_{10}\oplus x_8\oplus x_{11}\oplus x_{13} . $\\
\hspace*{1.0cm} $ 			 W_{11}=x_{11}\oplus x_4 . $\\
\hspace*{1.0cm} $ 			 W_{12}=x_{12} . $\\
\hspace*{1.0cm} $ 			 W_{13}=x_{13}\oplus x_6\oplus x_{10} . $\\
\hspace*{1.0cm} $ 			 W_{14}=x_{14} . $\\
\hspace*{1.0cm} $ 			 W_{15}=x_{15}\oplus x_2\oplus x_{16} . $\\
\hspace*{1.0cm} $ 			 W_{16}=x_{16}\oplus x_5\oplus x_{15} . $\\
\hspace*{1.0cm} $ 			 W_{17}=x_{17}\oplus x_3\oplus x_4 . $\\ \\
	If in step $ 3 $ where $ j_1=10 $, if $ j=13 $, index code symbols $ W_9 $ and $ W_{13} $ will be\\
\hspace*{1.0cm} $ 		W_9 = x_9\oplus x_{10}. $\\
\hspace*{1.0cm} $ 		W_{13} = x_{13}\oplus x_6 $\\
 and all the remaining index code symbols will remain same.\\
\end{ex}

\subsection{Algorithm $ 2 $}
	Let the IC structure with outer cycles be called $ \mathcal{G} $. Consider the following algorithm to decode messages corresponding to inner vertices.
	\begin{algo}
		\label{algo2}
		Let the inner vertex be $ i $. Let the subset of vertices in $ V_{NI}(i) $ which are part of outer cycles be called $ V_{NIC}(i) $. Now, define the set $ V'_{NI}(i) $ as
		\begin{displaymath}
		V'_{NI}(i)=  \underset{j \in V_{NIC}(i)}{\bigcup} V_{OC}(j) 
		\end{displaymath} 
		Compute $ Z_i $ as 
		\begin{equation*}
		Z_i = W_I \underset{q\in V_{NI}(i) \backslash V_{NIC}(i)}{\bigoplus}W_q \underset{j\in V'_{NI}(i)}{\bigoplus}W_j 
		\end{equation*}to decode $ x_i $ for $ i \in \lbrace1,2,\dots,K\rbrace $.
		The messages corresponding to non-inner vertices are decoded directly by using the index code symbols corresponding to the respective non-inner vertices.
\end{algo}
\begin{remark}
		In a rooted tree $ T_i $ in which \textit{c}$ 2 $ is violated, \textit{Algorithm} \ref{algo2} includes the non-inner vertices that are not present in the rooted tree but are in the out-neighborhood of some vertices in $ V_{NI}(i) $ to compute $ Z_i $. (This is the only difference in \textit{Algorithm} $ \ref{algo2} $  compared to \textit{Algorithm} \ref{algo1}).
\end{remark}
\begin{exmp}[continued]
		The decoding is done for messages corresponding to non-inner vertices using respective index code symbols. The decoding of messages corresponding to inner vertices is done using \textit{Algorithm} $ 2 $ as follows.\\
		\begin{itemize}
			\item $ i=1 $. $ V_{NIC}(1)=\lbrace 8,9,10,13,15,16\rbrace $. 
			Hence $ V'_{NI}(1)=\lbrace8,9,10,13,15,16\rbrace $ and $ V_{NI}(1)\backslash V_{NIC}(1) =\lbrace 7,11,14\rbrace $. 
			$ Z_1 = W_I\oplus W_7\oplus W_8\oplus W_9\oplus W_{10}\oplus W_{11}\oplus W_{13}\oplus W_{14}\oplus W_{15}\oplus W_{16} $ which results in $ Z_1=x_1\oplus x_7\oplus x_{14} $ and user $ 1 $ has $ x_7 $ and $ x_{14} $ in its side information.
			\item $ i=2 $. $ V_{NIC}(2)=\lbrace 8,10,13\rbrace $. 
			Hence $ V'_{NI}(2)=\lbrace8,9,10,13\rbrace $ and $ V_{NI}(2)\backslash V_{NIC}(2) =\lbrace 11\rbrace $. 
			$ Z_2 = W_I\oplus W_8\oplus W_9\oplus W_{10}\oplus W_{11}\oplus W_{13} $ which results in $ Z_2=x_1\oplus x_2\oplus x_{5} $ and user $ 2 $ has $ x_1 $ and $ x_{5} $ in its side information.
			\item $ i=3 $. $ V_{NIC}(3)=\lbrace 15,16\rbrace $. 
			Hence $ V'_{NI}(3)=\lbrace 15,16\rbrace $ and $ V_{NI}(i)\backslash V_{NIC}(i) =\phi $. 
			$ Z_3 = W_I\oplus W_{15}\oplus W_{16} $ which results in $ Z_3=x_3\oplus x_1\oplus x_{4}\oplus x_6 $ and user $ 3 $ has $ x_1 $, $ x_4 $ and $ x_6 $ in its side information.
			\item $ i=4 $. $ V_{NIC}(4)=\phi $. 
			Hence $ V'_{NI}(4)=\phi $ and $ V_{NI}(4)\backslash V_{NIC}(4) =\phi $. 
			$ Z_4 = W_I = x_1\oplus x_2\oplus x_3\oplus x_4\oplus x_5\oplus x_6 $ and user $ 4 $ has $ x_1 $, $ x_2 $, $ x_3 $, $ x_5 $ and $ x_{6} $ in its side information.
			\item $ i=5 $. $ V_{NIC}(5)=\lbrace 8,10,13\rbrace $. 
			Hence $ V'_{NI}(5)=\lbrace8,9,10,13\rbrace $ and $ V_{NI}(5)\backslash V_{NIC}(5) =\lbrace 11,12\rbrace $. 
			$ Z_5 = W_I\oplus W_8\oplus W_9\oplus W_{10}\oplus W_{11}\oplus W_{12}\oplus W_{13} $ which results in $ Z_5=x_5\oplus x_1\oplus x_2\oplus x_{12} $ and user $ 5 $ has $ x_1 $, $ x_2 $ and $ x_{12} $ in its side information.
			\item $ i=6 $. $ V_{NIC}(6)=\lbrace 15,16\rbrace $. 
			Hence $ V'_{NI}(6)=\lbrace 15,16\rbrace $ and $ V_{NI}(6)\backslash V_{NIC}(6) =\lbrace 17\rbrace $. 
			$ Z_6 = W_I\oplus W_{15}\oplus W_{16}\oplus W_{17} $ which results in $ Z_6=x_6\oplus x_1\oplus x_{17} $ and user $ 6 $ has $ x_1 $ and $ x_{17} $ in its side information.
		\end{itemize}
\end{exmp}
	In the following theorem, it is shown that an index code obtained by using \textit{Construction} \ref{cons2} on an IC structure with outer cycles is decodable using \textit{Algorithm} \ref{algo2}.
\subsection{The Main Result}
\begin{thm}
\label{thm1}
		An index code obtained by using \textit{Construction} $ 2 $ on an IC structure with outer cycles, $ \mathcal{G} $, is decodable using \textit{Algorithm} $ 2 $.
	\end{thm}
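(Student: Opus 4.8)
The plan is to treat the two kinds of users separately; the users demanding non-inner messages are disposed of at once. For a non-inner vertex $j$, inspection of the five clauses of \textit{Construction}~\ref{cons2} shows that in every case $W_{j}=x_{j}\oplus\bigoplus_{q\in S_{j}}x_{q}$ where $S_{j}$ is one of $N^{+}_{\mathcal{G}}(j)\setminus V_{OC}$, $\{N^{+}_{\mathcal{G}}(j)\setminus V_{OC}\}\cup N^{+}_{C}(j)$ or $\{N^{+}_{\mathcal{G}}(j)\setminus V_{OC}\}\cup\{N^{+}_{C}(j)\setminus\{j_{1}\}\}$, each of which is a subset of $N^{+}_{\mathcal{G}}(j)$. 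Since the user demanding $x_{j}$ has side information exactly $N^{+}_{\mathcal{G}}(j)$, it recovers $x_{j}=W_{j}\oplus\bigoplus_{q\in S_{j}}x_{q}$. So the whole content of the theorem is the recovery of the inner messages.

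Fix an inner vertex $i$. I would substitute the \textit{Construction}~\ref{cons2} expressions into $Z_{i}=W_{I}\oplus\bigoplus_{q\in V_{NI}(i)\setminus V_{NIC}(i)}W_{q}\oplus\bigoplus_{j\in V'_{NI}(i)}W_{j}$, expand $Z_{i}$ as one $\mathbb{F}_{q}$-linear combination $\bigoplus_{v}c_{v}x_{v}$, and prove $c_{i}=1$ while $c_{v}=0$ for every $v\notin N^{+}_{\mathcal{G}}(i)\cup\{i\}$; messages $x_{v}$ with $v\in N^{+}_{\mathcal{G}}(i)$ may survive in $Z_{i}$, being side information of user $i$, so this suffices. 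The coefficient $c_{v}$ picks up a $1$ from $W_{I}$ iff $v\in V_{I}$, a $1$ as an ``own term'' iff $v\in\bigl(V_{NI}(i)\setminus V_{NIC}(i)\bigr)\cup V'_{NI}(i)$, and a $1$ for each summed symbol whose out-neighbour list contains $v$. Two facts are used throughout: $V'_{NI}(i)=\bigcup_{j\in V_{NIC}(i)}V_{OC}(j)$ is exactly the union of the vertex sets of all outer cycles meeting $V_{NI}(i)$; and for $u\notin V_{OC}$ the out-neighbour list of $W_{u}$ is $N^{+}_{\mathcal{G}}(u)\setminus V_{OC}$, hence lists a vertex $v\notin V_{OC}$ iff $v\in N^{+}_{\mathcal{G}}(u)$.

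The first case is $v\notin V_{OC}$. Then $v\notin V'_{NI}(i)$, no $N^{+}_{C}(\cdot)$ list contains $v$, and $c_{v}$ is governed purely by $T_{i}$ and by the $\mathcal{G}$-edges into $v$ from the summation set $V_{NI}(i)\cup\bigl(V'_{NI}(i)\setminus V_{NI}(i)\bigr)$. The part coming from $V_{NI}(i)$ is exactly the telescoping analysed in \cite{vikas}: condition~1 of \textit{Definition}~\ref{def_ICS} forces $c_{i}=1$ (no I-cycle means no edge into $i$ from a vertex reachable from $i$ through non-inner vertices), condition~2 (unique I-path) together with the fact that every leaf of $T_{i}$ is an inner vertex cancels the own term of each non-inner $v\in V_{NI}(i)$ against its unique $T_{i}$-predecessor and shows that any further edge into a depth-$\geq 2$ non-inner $v$ closes an outer cycle through $v$; hence for $v\notin V_{OC}$ lying in $V(T_{i})$ this part already gives $c_{v}=[\,v\in N^{+}_{T_{i}}(i)\,]$, and the fact $b_{i,v}\in\{0,1\}$ from \cite{vikas} governs $v\notin V(T_{i})$. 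It then remains to show the contribution of $V'_{NI}(i)\setminus V_{NI}(i)$ to $c_{v}$ is even --- i.e.\ it exactly compensates any failure of \textit{c}$1$ or \textit{c}$2$ at such a $v$ --- which I would again obtain from I-path uniqueness applied to the outer cycles packaged into $V'_{NI}(i)$.

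The remaining case $v\in V_{OC}$ is the heart of the proof and the step I expect to be the real obstacle, because here every adjustment built into \textit{Construction}~\ref{cons2} --- dropping the $V_{OC}$-vertices from the generic out-neighbour lists, keeping $N^{+}_{C}(\cdot)$ for ``other'' cycle vertices, and deleting the CCV $j_{1}$ from one pre-central cycle vertex of an isolated or even non-isolated MOCG and from two such vertices of an odd non-isolated MOCG --- must be balanced simultaneously. My plan is: (i) note $v$ receives its own-term $1$ iff $v\in V'_{NI}(i)$, i.e.\ iff some outer cycle through $v$ meets $T_{i}$, and the cycle-edge contributions to $c_{v}$ come only from the predecessors of $v$ on its outer cycles that lie in the summation set; (ii) using the Lemma just proved, group these predecessors MOCG by MOCG --- along a single outer cycle the predecessor is unique, a shared non-CCV vertex has in-degree $1$ in the MOCG, and the CCV has in-degree $n$ (the number of outer cycles in the MOCG) with $n$ distinct pre-central predecessors; (iii) check that for every non-CCV cycle vertex the own term telescopes against the unique cycle-predecessor's $N^{+}_{C}$-term to give $c_{v}=0$, whereas at a CCV $j_{1}$ the deletions $\setminus\{j_{1}\}$ strip off exactly the parity of incoming $x_{j_{1}}$ contributions required to make $c_{j_{1}}$ correct --- one deletion when the relevant number of outer cycles is even, two when it is odd --- which is precisely why \textit{Construction}~\ref{cons2} separates the isolated/non-isolated and even/odd subcases; and (iv) confirm that the residual $N^{+}_{\mathcal{G}}(\cdot)\setminus V_{OC}$ parts of the cycle symbols behave like the tree terms of the previous paragraph, leaving only messages in $N^{+}_{\mathcal{G}}(i)$. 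Assembling the two cases gives $Z_{i}=x_{i}\oplus\bigoplus_{v\in T}x_{v}$ with $T\subseteq N^{+}_{\mathcal{G}}(i)$, so user $i$ recovers $x_{i}$, which completes the proof.
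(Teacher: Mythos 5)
Your overall strategy is the same as the paper's: expand $Z_i$ over $\mathbb{F}_q$, and show that every message outside $N^{+}_{\mathcal{G}}(i)\cup\{i\}$ appears an even number of times, handling non-cycle vertices by tree telescoping and cycle vertices MOCG-by-MOCG via the unique-CCV lemma (the paper does this by classifying the depth-$\geq 2$ vertices of $T_i$ into five types and counting the appearances of each type). But as written your proposal stops short of a proof exactly where you yourself flag ``the real obstacle.'' Steps (iii) and (iv) of your plan, and the claim that the contribution of $V'_{NI}(i)\setminus V_{NI}(i)$ ``exactly compensates any failure of \textit{c}$1$ or \textit{c}$2$,'' are statements of what must be proved, not arguments: you assert that the deletions of $j_1$ ``strip off exactly the parity required'' without doing the count, and you defer the compensation claim to ``I-path uniqueness'' without exhibiting how. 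The missing count is short but it is the substance of the theorem: one needs that $x_{j_1}$, for a CCV $j_1$ of an MOCG that is a union of $n$ outer cycles, occurs only in $W_{j_1}$ and in the symbols of those pre-central cycle vertices whose symbols retain $j_1$ --- which uses the fact (recorded as a remark in the paper, never stated by you) that a cycle vertex is never encoded into the symbol of a vertex outside its outer-cycle group, together with the observation that all of $V_{OC}(j_1)$ enters $Z_i$ --- giving totals $1+(n-1)$, $1+(n-2)$ and $1+n$ in the even, odd non-isolated and odd isolated cases respectively, all even.

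There is also a concrete error in how you read \textit{Construction}~\ref{cons2}, and with your reading the parity would actually fail. You describe the construction as ``deleting the CCV $j_1$ from one pre-central cycle vertex of an isolated or even non-isolated MOCG.'' Step $3$ applies to MOCGs (isolated or not) that are unions of an \emph{even} number of outer cycles; an isolated MOCG that is a union of an odd number $n$ of outer cycles falls under step $5$ and gets \emph{no} deletion --- and it must not, since its CCV already appears $1+n$ (even) times; performing one deletion there would leave $x_{j_1}$ with an odd count, so it would survive in $Z_i$ and decoding of $x_i$ would break. This is precisely the distinction the paper draws between its Type $3$/Type $4$ vertices and the CCVs it places in Type $5$. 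So the architecture of your argument matches the paper, but the decisive parity bookkeeping --- including getting the even/odd and isolated/non-isolated cases of the construction straight --- is missing rather than merely compressed.
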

	\begin{proof}IC structures with outer cycles can be categorized into two classes. One class is of those IC structures with outer cycles that satisfy \textit{c}$ 2 $ and the other class is of those IC structures that do not satisfy \textit{c}$ 2 $. Consider an inner vertex $ i $ and its rooted tree $ T_i $. It shall be shown that the inner and non-inner vertices at depth $ \geq 2 $ in $ T_i $ will be cancelled in the computation of $ Z_i $ irrespective of whether \textit{c}$ 2 $ is satisfied or not. The messages corresponding to the non-inner vertices can be decoded directly by using the index code symbols corresponding to the respective non-inner vertices. The different type of vertices that can be present at depth $ \geq2 $ in a rooted tree $ T_i $ are listed as follows.
		\begin{itemize}
			\item (Type $ 1 $ vertices) Inner vertices.
			\item (Type $ 2 $ vertices) Non-inner vertices that are not present in any outer cycle.
			\item (Type $ 3 $ vertices) CCVs corresponding to non-isolated and isolated MOCGs which are union of even number of outer cycles.
			\item (Type $ 4 $ vertices) CCVs corresponding to non- isolated MOCGs which are union of odd number of outer cycles.
			\item (Type $ 5 $ vertices) Vertices present in outer cycles that are not part of any MOCG, vertices which are not CCVs but present in MOCGs and CCVs corresponding to isolated MOCGs which are union of even number of outer cycles.
		\end{itemize}
		The message corresponding to a type $ 1 $ vertex appears exactly twice in $ Z_i $, once in $ W_I $ and once in the index code symbol corresponding to its immediate predecessor in $ T_i $, and is, hence, cancelled.\\
		The message corresponding to a type $ 2 $ vertex appears exactly twice in $ Z_i $, once in index code symbol corresponding to itself and once in index code symbol corresponding to its immediate predecessor in the rooted tree, and is, hence, cancelled.\\
		The message corresponding to a type $ 3 $ vertex appears even number of times in $ Z_i $, once in index code symbol corresponding to itself and odd number of times in index code symbols corresponding to its pre-central cycle vertices (as it is not encoded in the index code symbol of one pre-central cycle vertex, in step $ 3 $ of \textit{Construction} \ref{cons2}), and is, hence, cancelled.\\
		The message corresponding to a type $ 4 $ vertex appears even number of times in $ Z_i $, once in index code symbol corresponding to itself and odd number of times in index code symbols corresponding to its pre-central cycle vertices (as it is not encoded in the index code symbols of two pre-central cycle vertices, in step $ 4 $ of \textit{Construction} \ref{cons2}), and is, hence, cancelled.\\
		Among messages corresponding to type $ 5 $ vertices, the messages corresponding to vertices present in outer cycles that are not part of any MOCG and vertices which are not CCVs but present in MOCGs appear exactly twice, once in index code symbol corresponding to their immediate predecessor in the outer cycles and MOCG, respectively, and once in index code symbols corresponding to themselves and  are, hence, cancelled. A message corresponding to the CCV of an isolated MOCG appears even number of times in $ Z_i $, once in index code symbol corresponding to itself and in all of the index code symbols corresponding to its pre-central cycle vertices (which are odd in number), and is, hence, cancelled.
		\begin{remark}
			If any type $ k $ vertex, $ j $, where $ k\in \lbrace3,4,5\rbrace $, is present in $ T_i $ at a depth $ \geq2 $, then all the vertices in $ V_{OC}(j) $ are included in $ Z_i $ (if \textit{c}$ 2 $ is violated in $ T_i $, these are included as $ V'_{NI}(i) $ in $ Z_i $ and if \textit{c}$ 2 $ is not violated in $ T_i $, then the vertices in $ V_{OC}(j) $ are present in $ V_{NI}(i) $ and are hence included in $ Z_i $) due to which cancellation of the message corresponding to that type $ k $ vertex happens. Also, in \textit{Construction} \ref{cons2}, a message corresponding to a non-inner vertex $ q $, in an outer cycle is not encoded into index code symbol corresponding to a non-inner vertex which is not present in outer-cycle group of $ q $.
		\end{remark}
		Thus an index code obtained by using \textit{Construction} \ref{cons2} on an IC structure with outer cycles, $ \mathcal{G} $, is decodable using \textit{Algorithm} \ref{algo2}.
\end{proof}
\section{Illustrating Example}
	\label{sec_ex}In this section, \textit{Construction} \ref{cons2} and \textit{Algorithm} \ref{algo2} are illustrated for an IC structure with outer cycles which has both isolated and non-isolated MOCGs. The previous example had only an isolated MOCG.
\begin{ex}
		Consider $ \mathcal{G}$, a side information graph which is a $ 10 $-IC structure with outer cycles and with inner vertex set $ V_I = \lbrace1,2,3,4,5,6,7,8,9,10\rbrace $ given in Fig. \ref{fig2}.
		\begin{figure*}[!t]
			\centering
			\includegraphics[scale=0.6]{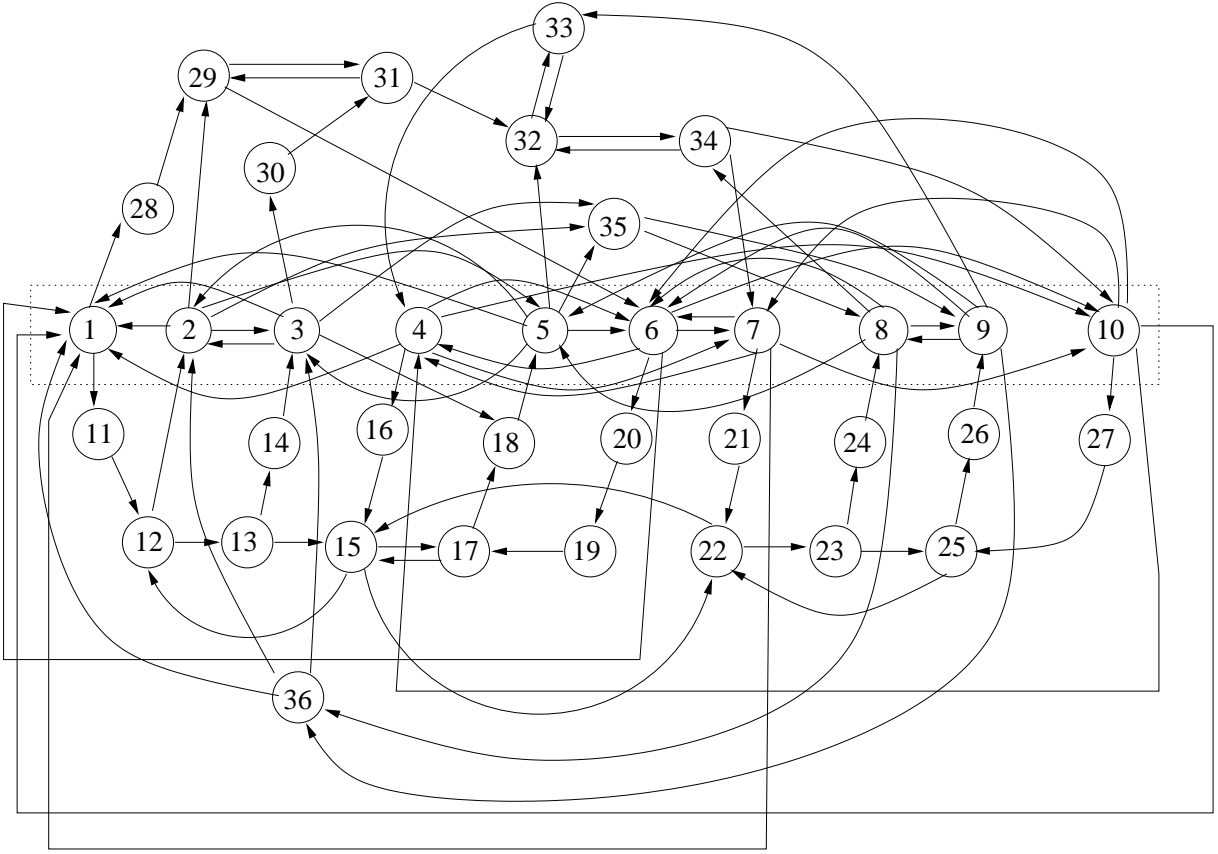}
			\caption{$ 10 $-IC structure with outer cycles, $ \mathcal{G} $.}
			\label{fig2}
		\end{figure*} It can easily be verified that 
		\begin{enumerate}
			\item there are no cycles containing only one vertex from the set $\lbrace1,2,3,4,5,6,7,8,9,10\rbrace$ in $\mathcal{G}$ (i.e., no I-cycles),
			\item using the rooted trees for each vertex in the set $\lbrace1,2,3,4,5,6,7,8,9,10\rbrace$, which are given in Fig. \ref{rt21}, \ref{rt22}, \ref{rt23}, \ref{rt24}, \ref{rt25}, \ref{rt26}, \ref{rt27}, \ref{rt28}, \ref{rt29} and \ref{rt210} respectively, there exists a unique path between any two different vertices in $ V_I $ in $ \mathcal{G} $ and does not contain any other vertex in $ V_I $ (i.e., unique I-path between any pair of inner vertices),
			\item $\mathcal{G}$ is the union of all the $10$ rooted trees.
\end{enumerate}
\begin{figure*}[!t]
			\centering	
			\begin{subfigure}{.40\textwidth}
				\centering
				\includegraphics[width=0.9\columnwidth]{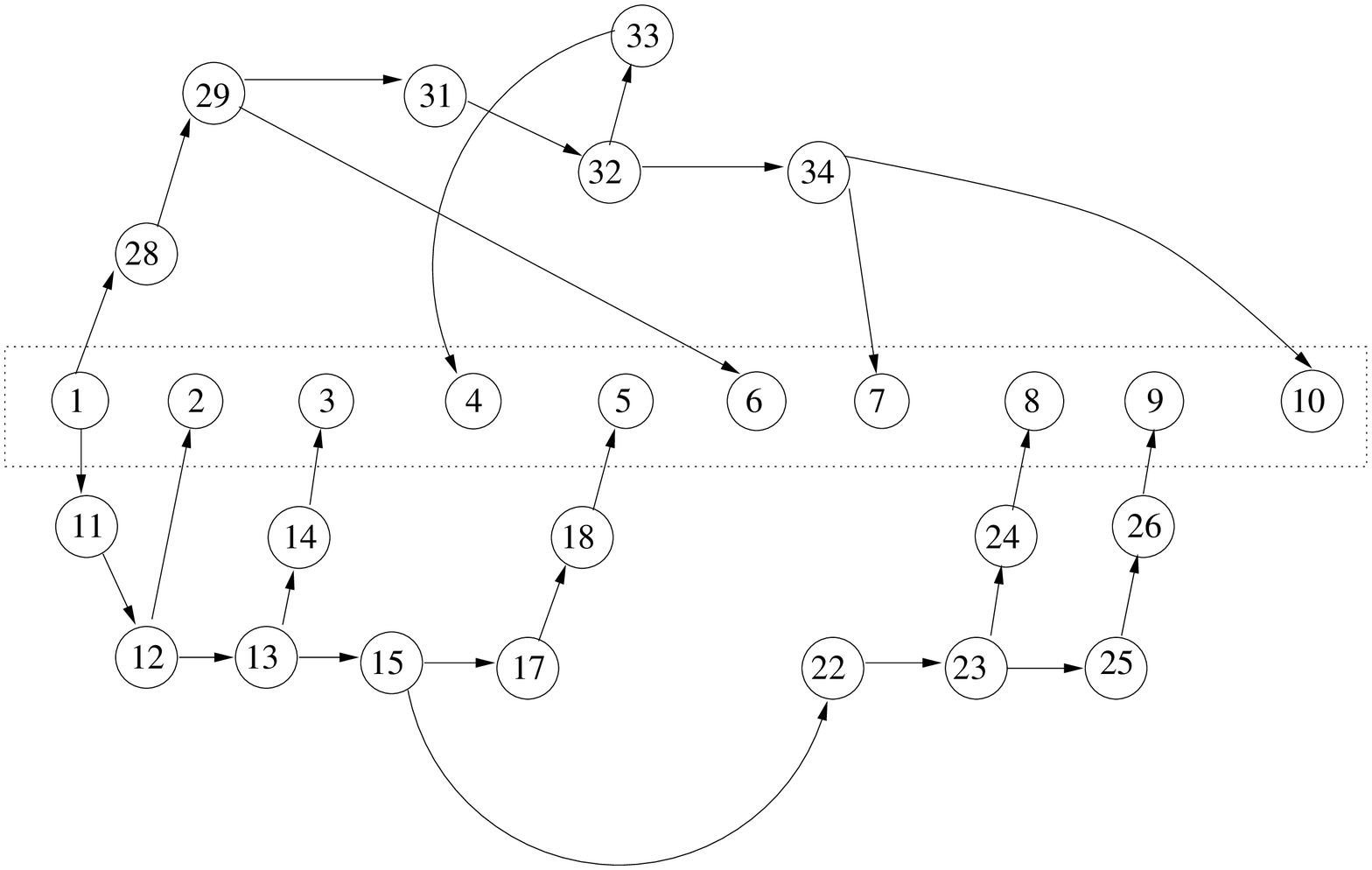}
				\caption{}
				\label{rt21}
			\end{subfigure}%
			\begin{subfigure}{.40\textwidth}
				\centering
				\includegraphics[width=0.9\columnwidth]{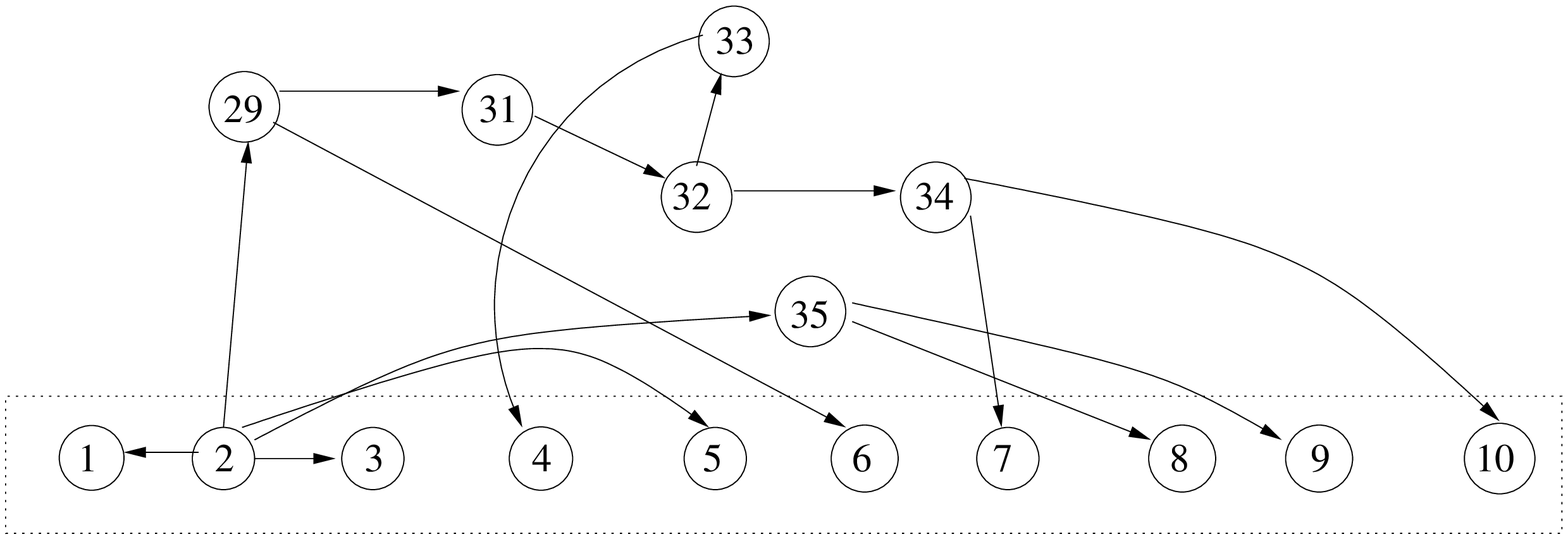}
				\caption{}
				\label{rt22}
			\end{subfigure}
			\begin{subfigure}{.40\textwidth}
				\centering
				\includegraphics[width=0.9\columnwidth]{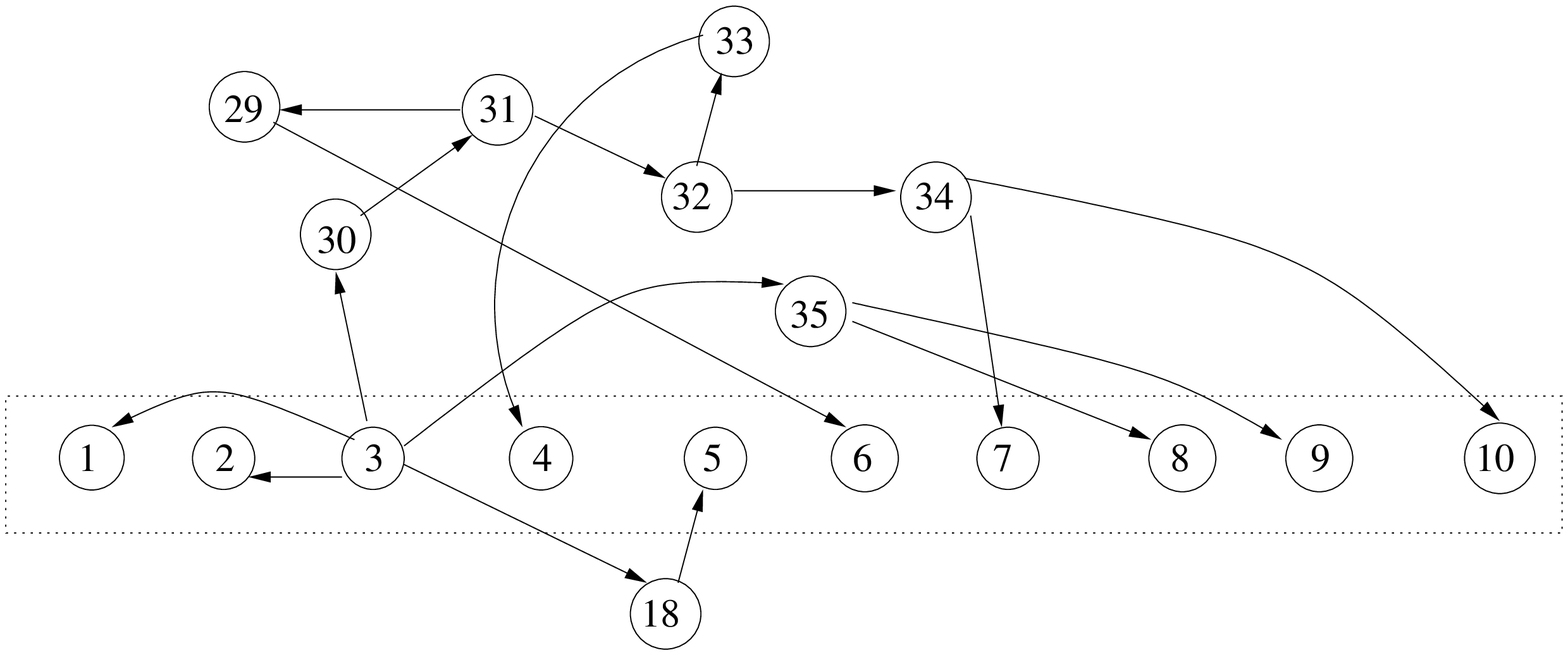}
				\caption{}
				\label{rt23}
			\end{subfigure}
			\begin{subfigure}{.40\textwidth}
				\centering
				\includegraphics[width=0.9\columnwidth]{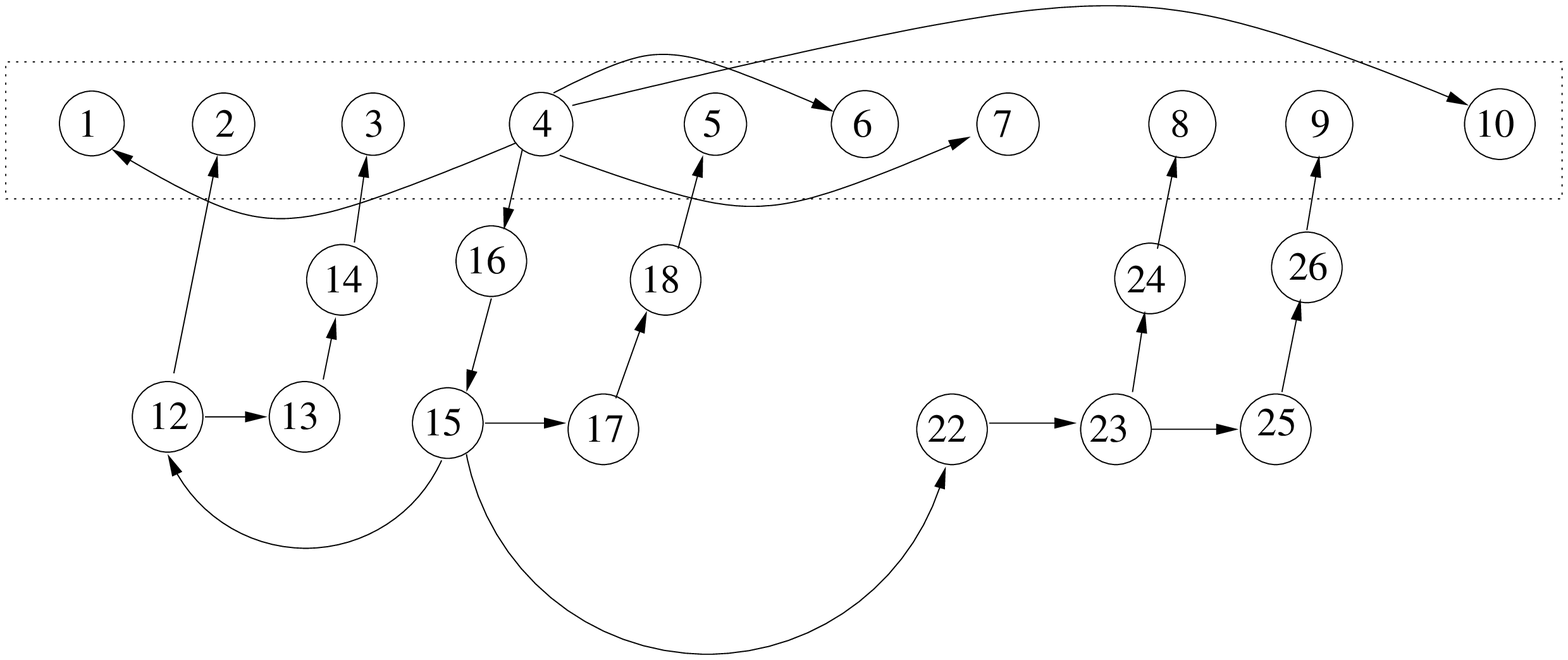}
				\caption{}
				\label{rt24}
			\end{subfigure}%
			\begin{subfigure}{.40\textwidth}
				\centering
				\includegraphics[width=0.9\columnwidth]{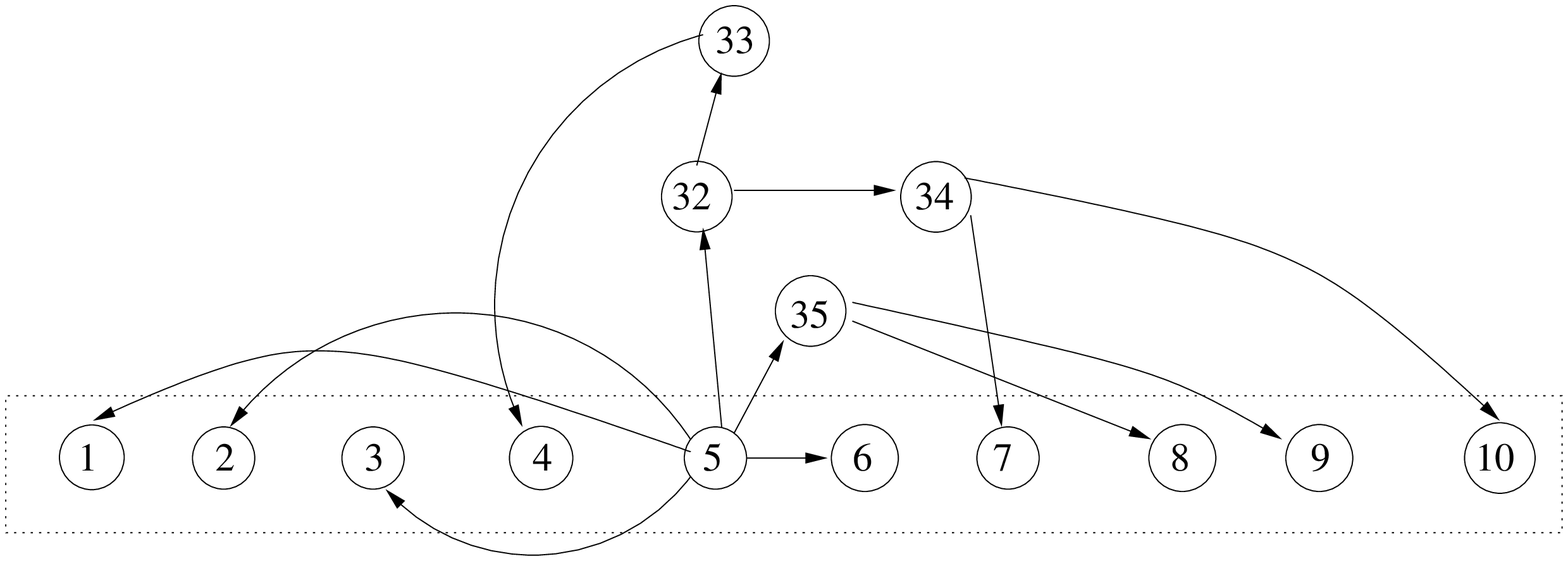}
				\caption{}
				\label{rt25}
			\end{subfigure}
			\begin{subfigure}{.40\textwidth}
				\centering
				\includegraphics[width=0.9\columnwidth]{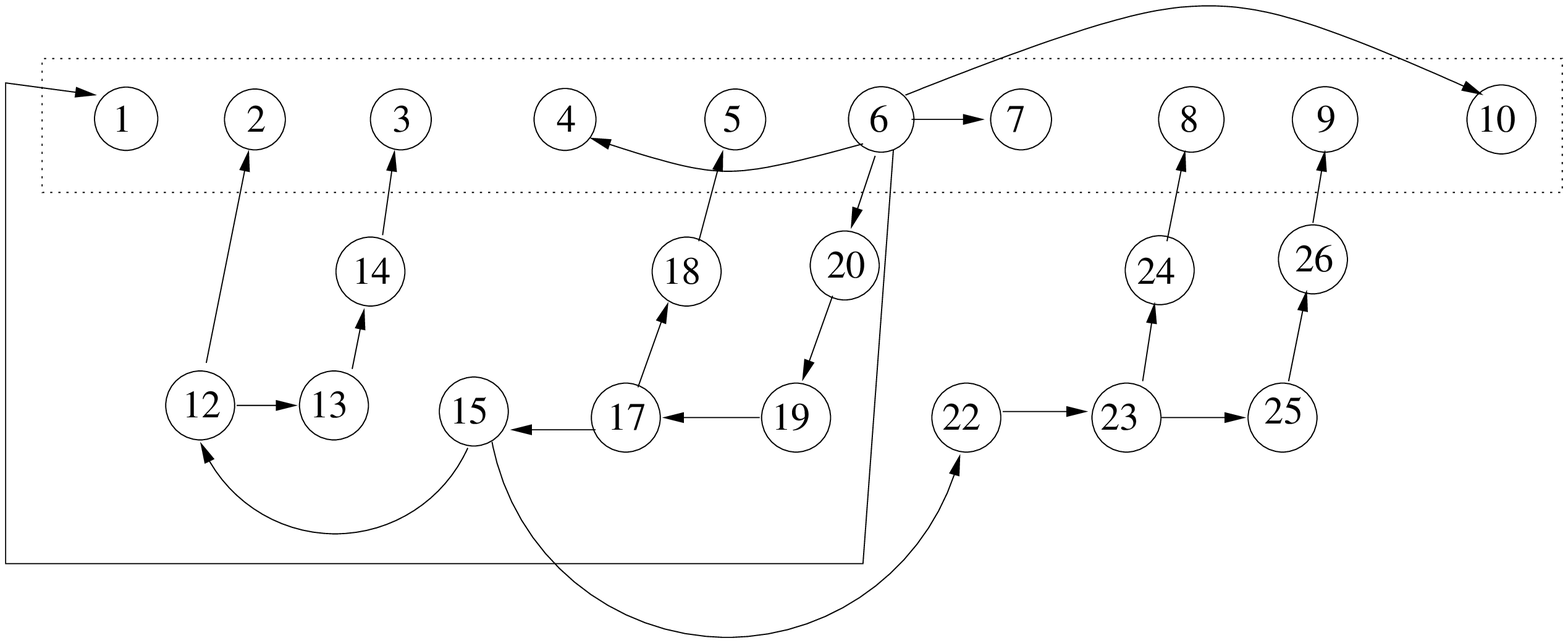}
				\caption{}
				\label{rt26}
			\end{subfigure}%
			\begin{subfigure}{.40\textwidth}
				\centering
				\includegraphics[width=0.9\columnwidth]{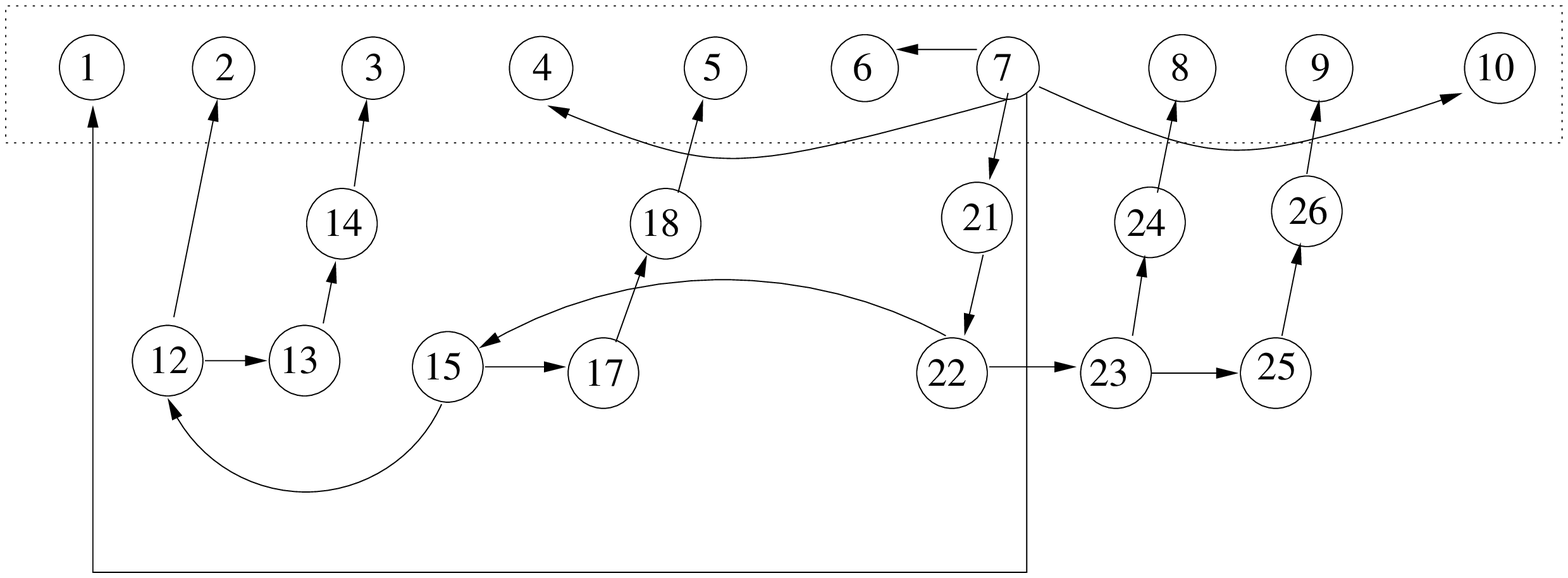}
				\caption{}
				\label{rt27}
			\end{subfigure}
			\begin{subfigure}{.40\textwidth}
				\centering
				\includegraphics[width=0.9\columnwidth]{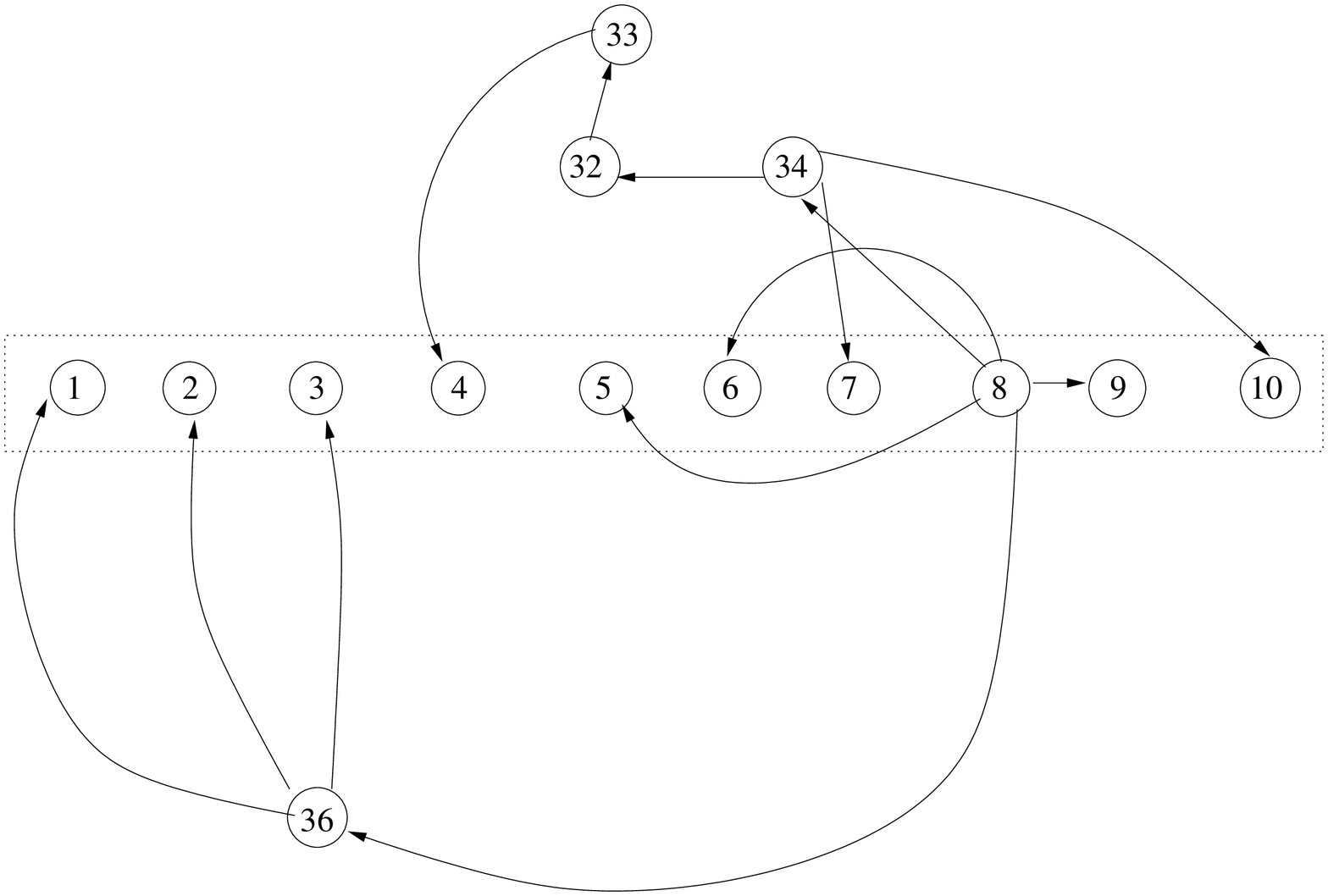}
				\caption{}
				\label{rt28}
			\end{subfigure}
			\begin{subfigure}{.40\textwidth}
				\centering
				\includegraphics[width=0.9\columnwidth]{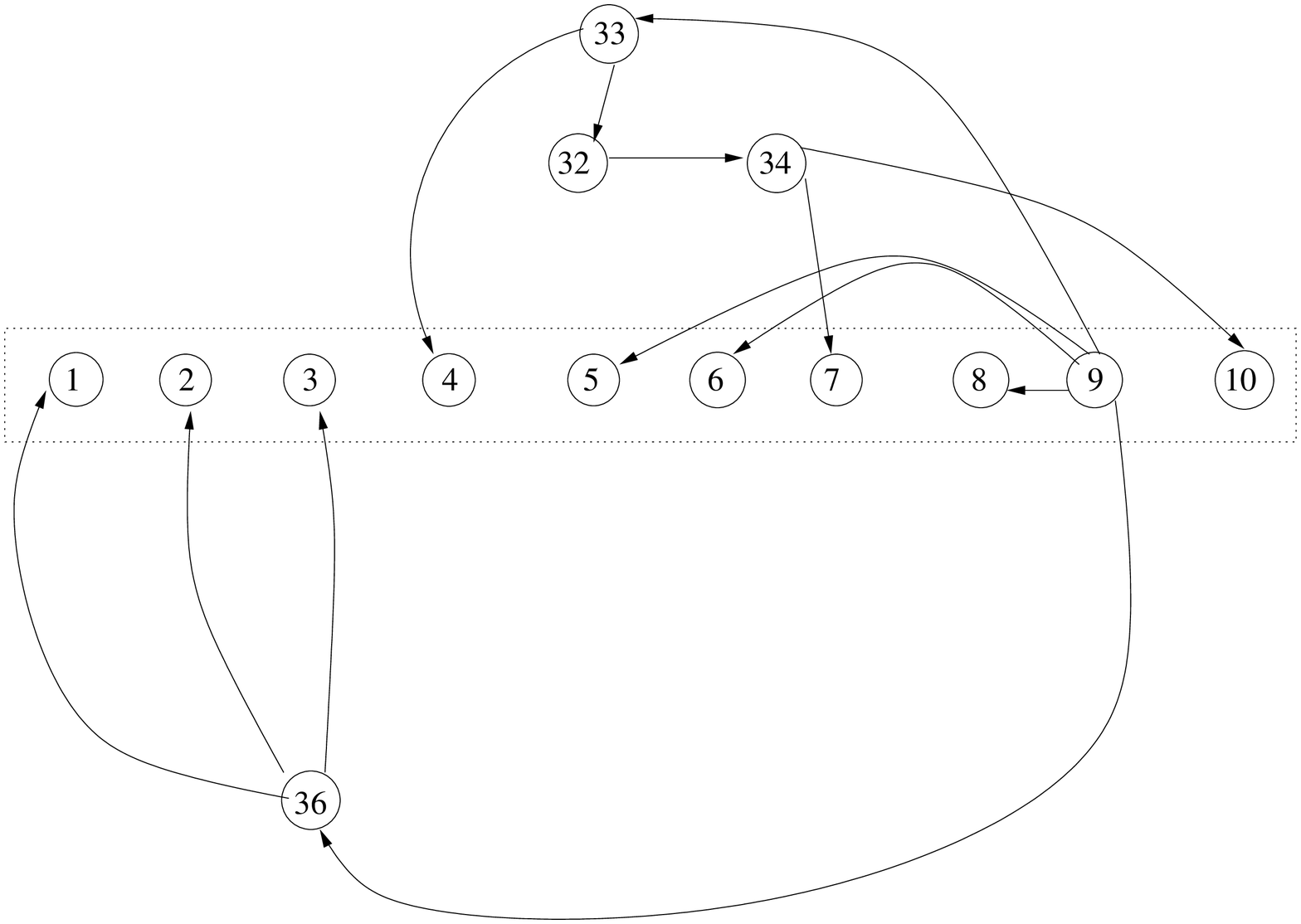}
				\caption{}
				\label{rt29}
			\end{subfigure}%
			\begin{subfigure}{.40\textwidth}
				\centering
				\includegraphics[width=0.9\columnwidth]{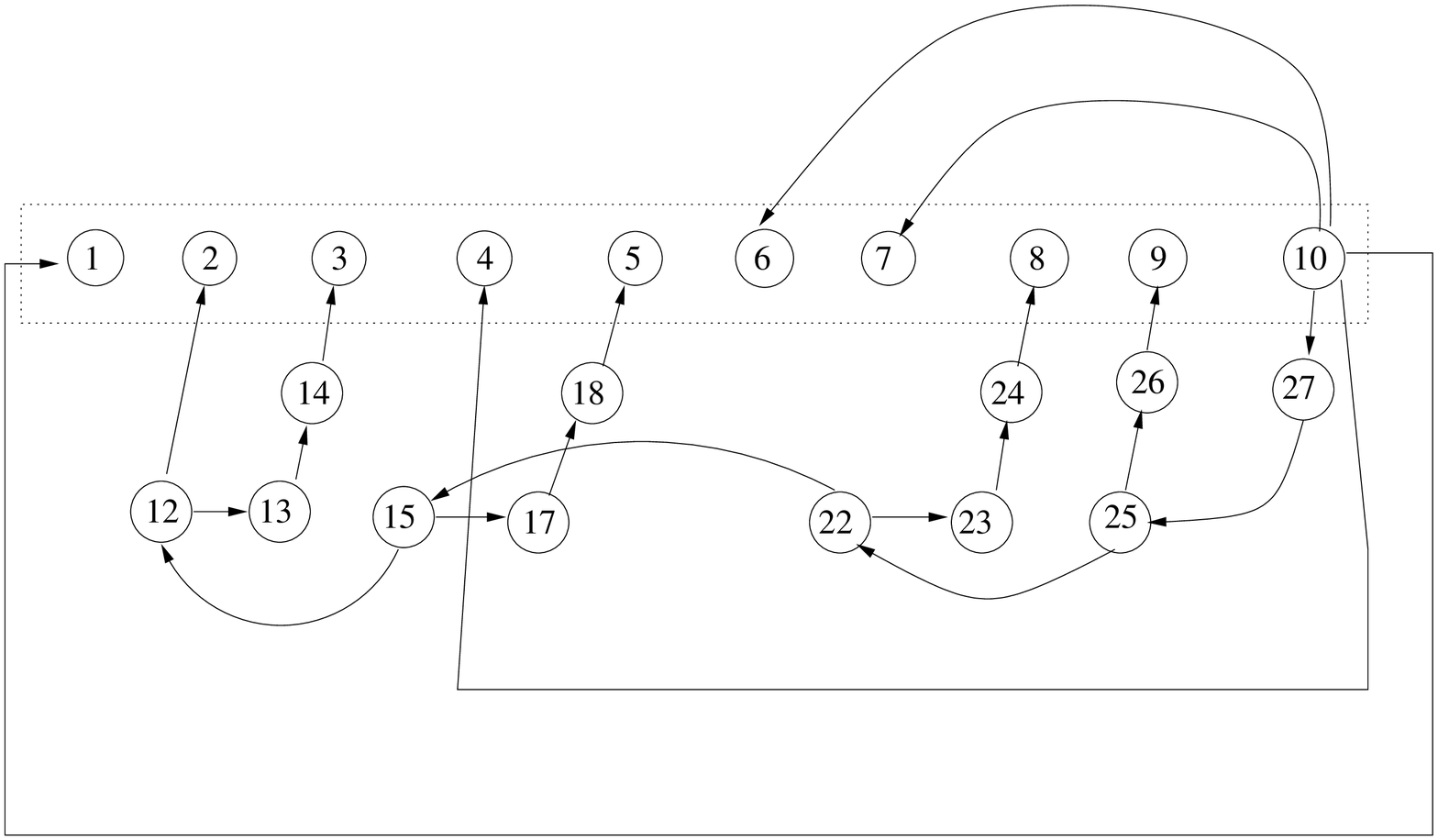}
				\caption{}
				\label{rt210}
			\end{subfigure}
			\caption{Figures showing rooted trees of inner vertices $ 1,2,3,4,5,6,7,8,9,10 $ of $ \mathcal{G} $, respectively.}
		\end{figure*}
		Now,\\ $ V_I = \lbrace 1,2,3,4,5,6,7,8,9,10 \rbrace $.\\
		$ V_{NI} = \lbrace11,12,13,14,\dots,34,35,36\rbrace $.\\
		$ V_{OC} = \lbrace 12,13,15,17,22,23,25,29,31,32,33,34\rbrace $.\\
		$ V_{MOCG}=\lbrace 12,13,15,17,22,23,25,32,33,34 \rbrace $.\\
		There is one isolated MOCG with CCV $ 32 $ formed by the union of two outer cycles with vertex sets $ \lbrace 32,33 \rbrace $ and $ \lbrace32,34\rbrace $.\\There are two non-isolated MOCGs one with CCV $ 15 $ which is formed by the union of three outer cycles with vertex sets $ \lbrace12,13,15 \rbrace $, $ \lbrace15,17\rbrace $ and $ \lbrace15,22\rbrace $ respectively and one with CCV $ 22 $ which is formed by union of two outer cycles with vertex sets $ \lbrace15,22\rbrace $ and $ \lbrace22,23,25\rbrace $. It is seen that these two MOCGs have the outer cycle with vertex set $ \lbrace15,22\rbrace $ in common. Now \textit{Construction} \ref{cons2} is carried out as follows.
		\begin{enumerate}
			\item $ W_I= x_1\oplus x_2\oplus x_3\oplus x_4\oplus x_5\oplus x_6\oplus x_7\oplus x_8\oplus x_9\oplus x_{10} $.
			\item The vertices in $ V_{NI}\backslash V_{OC} $ are $ 11 $, $ 14 $, $ 16 $, $ 18 $, $ 19 $, $ 20 $, $ 21 $, $ 24 $, $ 26 $, $ 27 $, $ 28 $, $ 30 $, $ 35 $ and $ 36 $.
			\begin{itemize}
				\item $ j=11 $. $ N^+_{\mathcal{G}}(11)\backslash V_{OC} = \phi $. Hence, $ W_{11}=x_{11} $. 
				\item $ j=14 $. $ N^+_{\mathcal{G}}(14)\backslash V_{OC} = \lbrace3 \rbrace $. Hence, $ W_{14}=x_{14} \oplus x_3 $.
				\item $ j=16 $. $ N^+_{\mathcal{G}}(16)\backslash V_{OC} = \phi $. Hence, $ W_{16}=x_{16} $.
				\item $ j=18 $. $ N^+_{\mathcal{G}}(18)\backslash V_{OC} = \lbrace5\rbrace $. Hence, $ W_{18}=x_{18}\oplus x_5 $.
				\item $ j=19 $. $ N^+_{\mathcal{G}}(19)\backslash V_{OC} = \phi $. Hence, $ W_{19}=x_{19} $.
				\item $ j=20 $. $ N^+_{\mathcal{G}}(20)\backslash V_{OC} = \lbrace 19 \rbrace $. Hence, $ W_{20}=x_{20}\oplus x_{19} $.
				\item $ j=21 $. $ N^+_{\mathcal{G}}(21)\backslash V_{OC} = \phi $. Hence, $ W_{21}=x_{21} $.
				\item $ j=24 $. $ N^+_{\mathcal{G}}(24)\backslash V_{OC} = \lbrace 8\rbrace $. Hence, $ W_{24}=x_{24}\oplus x_8 $.
				\item $ j=26 $. $ N^+_{\mathcal{G}}(26)\backslash V_{OC} = \lbrace9\rbrace $. Hence, $ W_{26}=x_{26}\oplus x_9 $.
				\item $ j=27 $. $ N^+_{\mathcal{G}}(27)\backslash V_{OC} = \phi $. Hence, $ W_{27}=x_{27} $.
				\item $ j=28 $. $ N^+_{\mathcal{G}}(28)\backslash V_{OC} = \phi $. Hence, $ W_{28}=x_{28} $.
				\item $ j=30 $. $ N^+_{\mathcal{G}}(30)\backslash V_{OC} = \phi $. Hence, $ W_{30}=x_{30} $.
				\item $ j=35 $. $ N^+_{\mathcal{G}}(35)\backslash V_{OC} = \lbrace8,9\rbrace $. Hence, $ W_{35}=x_{35}\oplus x_8\oplus x_9 $.
				\item $ j=36 $. $ N^+_{\mathcal{G}}(36)\backslash V_{OC} = \lbrace1,2,3\rbrace $. Hence, $ W_{36}=x_{36}\oplus x_1\oplus x_2\oplus x_3 $.
			\end{itemize}
			\item The isolated MOCG with CCV $ 32 $ is union of two outer cycles and the non-isolated MOCG with CCV $ 22 $ is also union of two outer cycles.
			\begin{itemize}
				\item $ j_1=32 $. $ j $ can be $ 33 $ or $ 34 $. Let $ j=33 $.\\
				$ N^+_{\mathcal{G}}(33) \backslash V_{OC}  = \lbrace4\rbrace$.\\
				$ N^+_{C}(33)=\lbrace32\rbrace $.\\Hence $  \lbrace N^+_{\mathcal{G}}(33) \backslash V_{OC}\rbrace \cup \lbrace N^+_{C}(33) \backslash  \lbrace 32 \rbrace \rbrace = \lbrace 4 \rbrace $. So, $ W_{33}=x_{33}\oplus x_4 $.
				\item $ j_1=22 $. $ j $ can be $ 15 $ or $ 25 $. Let $ j=25 $.\\
				$N^+_{\mathcal{G}}(25) \backslash V_{OC}  = \lbrace26\rbrace$.\\
				$ N^+_{C}(25)=\lbrace22\rbrace $.\\Hence $  \lbrace N^+_{\mathcal{G}}(25) \backslash V_{OC}\rbrace \cup \lbrace N^+_{C}(25) \backslash  \lbrace 22 \rbrace \rbrace = \lbrace 26 \rbrace $. So, $ W_{25}=x_{25}\oplus x_{26} $.
			\end{itemize} 
			\item The non-isolated MOCG with the CCV $ 15 $ is union of $ 3 $ (odd number) outer cycles.\\
			$ j_1=15 $.\\
			Pre-central cycle vertex set of $ 15 = \lbrace 13,17,22\rbrace $.\\
			Hence, let $ PV(15)=\lbrace 13,17\rbrace $.\\
			\begin{itemize}
				\item $ j= 17 $. $ N^+_{\mathcal{G}}(17) \backslash V_{OC}  = \lbrace18\rbrace$.\\
				$ N^+_{C}(17)=\lbrace15\rbrace $.\\Hence $  \lbrace N^+_{\mathcal{G}}(17) \backslash V_{OC}\rbrace \cup \lbrace N^+_{C}(17) \backslash  \lbrace 15 \rbrace \rbrace = \lbrace 18 \rbrace $. So, $ W_{17}=x_{17}\oplus x_{18} $.\\	
				\item $ j= 13 $. $ N^+_{\mathcal{G}}(13) \backslash V_{OC}  = \lbrace14\rbrace$.\\
				$ N^+_{C}(13)=\lbrace15\rbrace $.\\Hence $  \lbrace N^+_{\mathcal{G}}(13) \backslash V_{OC}\rbrace \cup \lbrace N^+_{C}(13) \backslash \lbrace 15 \rbrace \rbrace = \lbrace 14 \rbrace $. Therefore, $ W_{13}=x_{13}\oplus x_{14} $.
			\end{itemize}
			\item The remaining non-inner vertices are $ \lbrace 12,15,22,23,24,29,31,32,34 \rbrace $.
			\begin{itemize}
				\item $ j=12 $. $  N^+_{\mathcal{G}}(12) \backslash V_{OC}  = \lbrace2\rbrace $ and $ N^+_{C}(12) = \lbrace 13\rbrace $ and hence $ \lbrace N^+_{\mathcal{G}}(12) \backslash V_{OC} \rbrace \cup N^+_{C}(12)= \lbrace2,13\rbrace $. So, $ W_{12}=x_{12}\oplus x_2\oplus x_{13} $.
				\item $ j=15 $. $  N^+_{\mathcal{G}}(15) \backslash V_{OC}  = \phi $ and $ N^+_{C}(15) = \lbrace12,17,22\rbrace $ and hence $ \lbrace N^+_{\mathcal{G}}(15) \backslash V_{OC} \rbrace \cup N^+_{C}(15)= \lbrace12,17,22\rbrace $. So, $ W_{15}=x_{15}\oplus x_{12}\oplus x_{17}\oplus x_{22} $.
				\item $ j=22 $. $  N^+_{\mathcal{G}}(22) \backslash V_{OC}  = \phi $ and $ N^+_{C}(22) = \lbrace 15,23\rbrace $ and hence $ \lbrace N^+_{\mathcal{G}}(22) \backslash V_{OC} \rbrace \cup N^+_{C}(22)= \lbrace15,23\rbrace $. So, $ W_{22}=x_{22}\oplus x_{15}\oplus x_{23} $.
				\item $ j=23 $. $  N^+_{\mathcal{G}}(23) \backslash V_{OC}  = \lbrace24\rbrace $ and $ N^+_{C}(23) = \lbrace 25\rbrace $ and hence $ \lbrace N^+_{\mathcal{G}}(23) \backslash V_{OC} \rbrace \cup N^+_{C}(23)= \lbrace24,25\rbrace $. So, $ W_{23}=x_{23}\oplus x_{24}\oplus x_{25} $.
				\item $ j=29 $. $  N^+_{\mathcal{G}}(29) \backslash V_{OC}  = \lbrace6\rbrace $ and $ N^+_{C}(29) = \lbrace 31\rbrace $ and hence $ \lbrace N^+_{\mathcal{G}}(29) \backslash V_{OC} \rbrace \cup N^+_{C}(29)= \lbrace6,31\rbrace $. So, $ W_{29}=x_{29}\oplus x_6\oplus x_{31} $.
				\item $ j=31 $. $  N^+_{\mathcal{G}}(31) \backslash V_{OC}  =  \phi $ and $ N^+_{C}(31) = \lbrace 29\rbrace $ and hence $ \lbrace N^+_{\mathcal{G}}(31) \backslash V_{OC} \rbrace \cup N^+_{C}(31)= \lbrace29\rbrace $. So, $ W_{31}=x_{31}\oplus x_{29} $.
				\item $ j=32 $. $  N^+_{\mathcal{G}}(32) \backslash V_{OC}  = \phi $ and $ N^+_{C}(32) = \lbrace 33,34\rbrace $ and hence $ \lbrace N^+_{\mathcal{G}}(32) \backslash V_{OC} \rbrace \cup N^+_{C}(32)= \lbrace33,34\rbrace $. So, $ W_{32}=x_{32}\oplus x_{33}\oplus x_{34} $.
				\item $ j=34 $. $  N^+_{\mathcal{G}}(34) \backslash V_{OC}  = \lbrace7,10\rbrace $ and $ N^+_{C}(34) = \lbrace 32\rbrace $ and hence $ \lbrace N^+_{\mathcal{G}}(34) \backslash V_{OC} \rbrace \cup N^+_{C}(34)= \lbrace7,10,32\rbrace $. So, $ W_{34}=x_{34}\oplus x_7\oplus x_{10}\oplus x_{32} $.
			\end{itemize}
		\end{enumerate}
		The obtained index code symbols are listed as follows.\\
			 $ W_I= x_1\oplus x_2\oplus x_3\oplus x_4\oplus x_5\oplus x_6\oplus  x_7\oplus x_8\oplus x_9\oplus x_{10} $.\\
			$ W_{11}=x_{11} $.\\
$			 W_{12}=x_{12}\oplus x_2\oplus x_{13} $.\\
$			 W_{13}=x_{13}\oplus x_{14} $.\\
$			 W_{14}=x_{14} \oplus x_3 $.\\
$			 W_{15}=x_{15}\oplus x_{12}\oplus x_{17}\oplus x_{22} $.\\
$			 W_{16}=x_{16} $.\\
$			 W_{17}=x_{17}\oplus x_{18} $.\\
			$ W_{18}=x_{18}\oplus x_5 $.\\
$			 W_{19}=x_{19} $.\\
$			 W_{20}=x_{20}\oplus x_{19} $.\\
$			 W_{21}=x_{21} $.\\
$			 W_{22}=x_{22}\oplus x_{15}\oplus x_{23} $.\\
$			 W_{23}=x_{23}\oplus x_{24}\oplus x_{25} $.\\
$			 W_{24}=x_{24}\oplus x_8 $.\\
$			 W_{25}=x_{25}\oplus x_{26} $.\\
$			 W_{26}=x_{26}\oplus x_9 $.\\
$			 W_{27}=x_{27} $.\\
$			 W_{28}=x_{28} $.\\
$			 W_{29}=x_{29}\oplus x_6\oplus x_{31} $.\\
$			 W_{30}=x_{30} $.\\
$			 W_{31}=x_{31}\oplus x_{29} $.\\
			 $ W_{32}=x_{32}\oplus x_{33}\oplus x_{34} $.\\
			 $ W_{33}=x_{33}\oplus x_4 $.\\
			 $ W_{34}=x_{34}\oplus x_7\oplus x_{10}\oplus x_{32} $.\\
			 $ W_{35}=x_{35}\oplus x_8\oplus x_9 $.\\
			 $ W_{36}=x_{36}\oplus x_1\oplus x_2\oplus x_3 $.
		The decoding of messages corresponding to non-inner vertices can be done using index code symbols corresponding to the respective non-inner vertices. The decoding of messages corresponding to inner vertices using \textit{Algorithm} \ref{algo2} is as follows.
		\begin{itemize}
			\item $ i=1 $.\\
			$ V_{NIC}(1)=\lbrace 12,13,15,17,22,23,25,29,31,32,33,34\rbrace $.\\
			$ V'_{NI}(1)= V_{NIC}(1) $ and $ V_{NI}(1)\backslash V_{NIC}(1) =\lbrace 11,14,18,24,26,28\rbrace $.\\
			$ Z_1 = W_I\oplus W_{11}\oplus W_{12}\oplus W_{13}\oplus W_{14}\oplus W_{15}\oplus W_{17}\oplus W_{18}\oplus W_{22}\oplus W_{23}\oplus W_{24} \oplus W_{25}\oplus W_{26}\oplus W_{28}\oplus W_{29}\oplus W_{31}\oplus W_{32}\oplus W_{33}\oplus W_{34}$ which results in $ Z_1=x_1\oplus x_{11}\oplus x_{28} $ and user $ 1 $ has $ x_{11} $ and $ x_{28} $ in its side information.
			\item $ i=2 $.\\
			$ V_{NIC}(2)=\lbrace 29,31,32,33,34\rbrace $.\\
			$ V'_{NI}(2)= V_{NIC}(2) $ and $ V_{NI}(2)\backslash V_{NIC}(2) =\lbrace 35\rbrace $.\\
			$ Z_2 = W_I\oplus W_{29}\oplus W_{31}\oplus W_{32}\oplus W_{33}\oplus W_{34} \oplus  W_{28}$ which results in $ Z_2=x_2\oplus x_{1}\oplus x_{3}\oplus x_5 $ and user $ 2 $ has $ x_{1} $, $ x_3 $ and $ x_{5} $ in its side information.
			\item $ i=3 $.\\
			$ V_{NIC}(3)=\lbrace 29,31,32,33,34\rbrace $.\\
			$ V'_{NI}(3)= V_{NIC}(3) $ and $ V_{NI}(3)\backslash V_{NIC}(3) =\lbrace18,30,35\rbrace $.\\
			$ Z_3 = W_I\oplus W_{18}\oplus W_{29}\oplus W_{30}\oplus W_{31}\oplus W_{32}\oplus W_{33}\oplus W_{34} \oplus  W_{35}$ which results in $ Z_3=x_3\oplus x_{1}\oplus x_{2}\oplus x_{18}\oplus x_{30}\oplus x_{35} $ and user $ 3 $ has $ x_{1} $, $ x_2 $, $ x_{18}$, $ x_{30} $ and $ x_{35} $ in its side information.
			\item $ i=4 $.\\
			$ V_{NIC}(4)=\lbrace 12,13,15,17,22,23,25\rbrace $.\\
			$ V'_{NI}(4)= V_{NIC}(4) $ and $ V_{NI}(4)\backslash V_{NIC}(4) =\lbrace14,16,18,24,26\rbrace $.\\
			$ Z_4 = W_I\oplus W_{12}\oplus W_{13}\oplus W_{14}\oplus W_{15}\oplus W_{16}\oplus W_{17}\oplus W_{18} \oplus W_{22}\oplus W_{23}\oplus W_{24}\oplus W_{25}\oplus W_{26}$ which results in $ Z_4=x_4\oplus x_{1}\oplus x_{6}\oplus x_{7}\oplus x_{16}$ and user $ 4 $ has $ x_{1} $, $ x_6 $, $ x_{7} $ and $ x_{16} $ in its side information.
			\item $ i=5 $.\\
			$ V_{NIC}(5)=\lbrace 32,33,34\rbrace $.\\
			$ V'_{NI}(5)= V_{NIC}(5) $ and $ V_{NI}(5)\backslash V_{NIC}(5) =\lbrace35\rbrace $.\\
			$ Z_5 = W_I\oplus W_{32}\oplus W_{33}\oplus W_{34}\oplus W_{35} $ which results in $ Z_5=x_5\oplus x_{1}\oplus x_{2}\oplus x_{3}\oplus x_6\oplus x_{35}$ and user $ 5 $ has $ x_{1} $, $ x_2 $, $ x_{3} $, $ x_6 $ and $ x_{35} $ in its side information.
			\item $ i=6 $.\\
			$ V_{NIC}(6)=\lbrace 12,13,15,17,22,23,25\rbrace $.\\
			$ V'_{NI}(6)= V_{NIC}(6) $ and $ V_{NI}(6)\backslash V_{NIC}(6) =\lbrace14,18,19,20,24,26\rbrace $.\\
			$ Z_6 = W_I\oplus W_{12}\oplus W_{13}\oplus W_{14}\oplus W_{15}\oplus W_{17}\oplus W_{18}\oplus W_{19}\oplus W_{20}\oplus W_{22}\oplus W_{23}\oplus W_{24}\oplus W_{25}\oplus W_{26} $ which results in $ Z_6=x_6\oplus x_{1}\oplus x_{4}\oplus x_{7}\oplus x_{10}\oplus x_{20}$ and user $ 6 $ has $ x_{1} $, $ x_4 $, $ x_7 $, $ x_{10} $ and $ x_{20} $ in its side information.
			\item $ i=7 $.\\
			$ V_{NIC}(7)=\lbrace 12,13,15,17,22,23,25\rbrace $.\\
			$ V'_{NI}(7)= V_{NIC}(7) $ and $ V_{NI}(7)\backslash V_{NIC}(7) =\lbrace14,18,21,24,26\rbrace $.\\
			$ Z_7 = W_I\oplus W_{12}\oplus W_{13}\oplus W_{14}\oplus W_{15}\oplus W_{17}\oplus W_{18}\oplus W_{21}\oplus W_{22}\oplus W_{23}\oplus W_{24}\oplus W_{25}\oplus W_{26} $ which results in $ Z_7=x_7\oplus x_{1}\oplus x_{4}\oplus x_{6}\oplus x_{10}\oplus x_{21}$ and user $ 7 $ has $ x_{1} $, $ x_4 $, $ x_{6} $, $ x_{10} $ and $ x_{21} $ in its side information.
			\item $ i=8 $.\\
			$ V_{NIC}(8)=\lbrace 32,33,34\rbrace $.\\
			$ V'_{NI}(8)= V_{NIC}(8) $ and $ V_{NI}(8)\backslash V_{NIC}(8) =\lbrace36\rbrace $.\\
			$ Z_8 = W_I\oplus W_{32}\oplus W_{33}\oplus W_{34}\oplus W_{36} $ which results in $ Z_8=x_8\oplus x_{5}\oplus x_{6}\oplus x_{9}\oplus x_{36}$ and user $ 8 $ has $ x_{5} $, $ x_6 $, $ x_{9} $ and $ x_{36} $ in its side information.
			\item $ i=9 $.\\
			$ V_{NIC}(9)=\lbrace 32,33,34\rbrace $.\\
			$ V'_{NI}(9)= V_{NIC}(9) $ and $ V_{NI}(9)\backslash V_{NIC}(9) =\lbrace36\rbrace $.\\
			$ Z_9 = W_I\oplus W_{32}\oplus W_{33}\oplus W_{34}\oplus W_{36} $ which results in $ Z_9=x_9\oplus x_{5}\oplus x_{6}\oplus x_{8}\oplus x_{36}$ and user $ 9 $ has $ x_{5} $, $ x_6 $, $ x_{8} $ and $ x_{36} $ in its side information.
			\item $ i=10 $.\\
			$ V_{NIC}(10)=\lbrace 12,13,15,17,22,23,25\rbrace $.\\
			$ V'_{NI}(10)= V_{NIC}(10) $ and $ V_{NI}(10)\backslash V_{NIC}(10) =\lbrace14,18,24,26,27\rbrace $.\\
			$ Z_{10} =  W_I\oplus W_{12}\oplus W_{13}\oplus W_{14}\oplus W_{15}\oplus W_{17}\oplus W_{18}\oplus W_{22}\oplus W_{23}\oplus W_{24}\oplus W_{25}\oplus W_{26}\oplus W_{27} $ which results in $ Z_{10}=x_{10}\oplus x_{1}\oplus x_{6}\oplus x_{7}\oplus x_{24}$ and user $ 10 $ has $ x_{1} $, $ x_6 $, $ x_{7} $ and $ x_{24} $ in its side information.
		\end{itemize}
		\end{ex}Thus an index code obtained by using \textit{Construction} \ref{cons2} on an IC structure with outer cycles is decodable using \textit{Algorithm} \ref{algo2}.
\section{Conclusion}
\label{sec_con}
	In this paper, index code construction and a decoding algorithm are given for any IC structure with outer cycles. 
In \cite{vaddi}, for a particular type of structure on the outer cycles for a IC structure optimal length index code is constructed. Investigating the optimality of the index codes constructed at least for some subclasses  is an interesting open problem. 


\begin{thebibliography}{}
		\bibitem{bk}
		Y. Birk and T. Kol, "Informed-source coding-on-demand(ISCOD) over broadcast channels," in \textit{Proc. 7th Annu. Joint Conf. IEEE Comput. Commun. Soc. (INFOCOM), vol. 3. Mar./Apr. 1998, pp. 1257-1264}.
		\bibitem{Ong}
		L.Ong and C.K. Ho, " Optimal Index Codes for a Class of Multicast Networks with Receiver side information", in {\it{Proc. IEEE ICC}}, Ottawa, Canada, June 2012, pp. 2213-2218.
		\bibitem{BK2}
		Z. Bar-Yossef, Y. Birk, T. S. Jayram and T. Kol, "Index Coding with Side Information," in \textit{IEEE Trans. Inf. Theory, vol. 57, no. 3, pp. 1479-1494, Mar. 2011}.
		\bibitem{TOJ}
		C. Thapa, L. Ong, and S. Johnson, ``Interlinked Cycles for Index Coding: Generalizing Cycles and Cliques", in \textit{IEEE Trans. Inf. Theory, vol. 63, no. 6, Jun. 2017}.
		\bibitem{TOJ2}
		C. Thapa, L. Ong, and S. Johnson, ``Interlinked Cycles for Index Coding: Generalizing Cycles and Cliques", in arxiv (arxiv:1603.00092v2 [cs.IT] 25 Feb 2018).
		\bibitem{vaddi}
		M. B. Vaddi and B. S. Rajan, `` Optimal index codes for a new class of interlinked cycle structure", \textit{IEEE Comms. Lett, Vol. 22, no. 4, pp. 684-687, Apr. 2018}
		\bibitem{vikas}
		K. V. Bharadwaj and B. S. Rajan, ``On index codes for interlinked cycle structured side-information graphs", 12 Feb 2018. [Online]. Available: https://arxiv.org/abs/1802.04183v1
\end{thebibliography}
\end{document}